\documentclass[12pt]{article}
\usepackage{amsmath,amssymb,amsthm,graphicx}

\usepackage[pdfpagemode=UseOutlines,
bookmarks=true, bookmarksopenlevel=3, bookmarksopen=true,
bookmarksnumbered=true, unicode=true,
pdffitwindow=true,pdfpagelayout=SinglePage,pdfhighlight=/P,
colorlinks=true,linkcolor=blue,citecolor=blue,urlcolor=blue]{hyperref}

\usepackage[left=2cm, right=2cm, top=2cm, bottom=2cm]{geometry}
\allowdisplaybreaks[2]

\usepackage[width=0.95\textwidth,indention=0mm,justification=centerlast,font=small,labelfont=bf,labelsep=period]{caption}

\numberwithin{equation}{section}
\theoremstyle{plain}
\newtheorem{proposition}{Proposition}
\theoremstyle{remark}
\newtheorem{remark}{Remark}

\title{Self-similar asymptotics in the decay problem for the Volterra lattice with zero boundary condition}

\author{V.E.\:Adler\thanks{L.D.\:Landau Institute for Theoretical Physics, Chernogolovka, Russian Federation.},\quad 
B.I.\:Suleimanov\thanks{Institute of Mathematics, Ufa Federal Research Centre, Russian Academy of Sciences, 112, Chernyshevsky Street, Ufa 450008, Russian Federation.}} 

\date{June 11, 2026}

\begin{document}

\maketitle
\begin{abstract}
The article is devoted to the problem of decay of initial stationary state for the Volterra lattice with zero boundary condition. We show that this process is asymptotically self-similar and calculate the propagation velocity of the decay wave, the leading terms of the asymptotics and corrections, in the main and transition sectors of the wave.\medskip

Key words: Volterra lattice, initial-boundary value problem, decay wave, shock wave, asymptotics, Painlev\'e equation, Hastings--McLeod solution.
\end{abstract}

\section{Introduction}

For the propagation of waves in nonlinear media, the processes of decay of initial discontinuity and wave breaking are of great importance. These processes were first described by Gurevich and Pitaevskii \cite{Gurevich_Pitaevskii_1973, Gurevich_Pitaevskii_1974} within the framework of the Whitham averaging method, for model examples such as the Korteweg--de Vries equation and the nonlinear Schr\"odinger equation. These problems have been studied in a lot of works. In particular, in addition to the Whitham method, the inverse scattering method was developed in \cite{Hruslov_1976, Hruslov_Kotlyarov_1994}, the well-posedness of various formulations of the Cauchy problem with nonstandard boundary conditions was studied in \cite{Cohen_1984, Kappeler_1986, Pokhozhaev_2012, Rybkin_2017}, asymptotic formulas were refined and generalized for different types of initial data and for other nonlinear equations in \cite{Venakides_1986, Bikbaev_1989, Novokshenov_2005, Egorova_Gladka_Kotlyarov_Teschl_2013}. A detailed bibliography, history of the problem and an overview of its further development can be found in \cite{Kamchatnov_2021, Kamchatnov_2025, Kamchatnov_2026}. We also mention that many works were devoted to construction of exact solutions expressed in terms of special solutions of higher analogues of the Painlev\'e equations, for instance \cite{Suleimanov_1994, Kudashev_Suleimanov_1996, Garifullin_Suleimanov_Tarkhanov_2009, Garifullin_Suleimanov_2010, Dubrovin_2006, Claeys_Vanlessen_2007, Claeys_Grava_2010, Claeys_Grava_2011, Adler_2020} if we limit ourselves to just the example of the KdV equation. Similar problems have also been studied for equations with discrete variables, also based on the Whitham averaging and inverse scattering methods. The simplest and at the same time fundamental model here is the Volterra lattice (VL)
\begin{equation}\label{VL}
 \dot u_n = u_n(u_{n+1}-u_{n-1}),
\end{equation}
as well as the modified Volterra and Toda lattices related with it by difference substitutions. Recall that (\ref{VL}) defines a discretization of the differential-integral kinetic equation for one-dimensional Langmuir waves in plasma \cite{Zakharov_Musher_Rubenchik_1974, Manakov_1974}. In another well-known interpretation, equations (\ref{VL}) describe an idealized food chain of biological species in which species $n$ has a population of size $u_n(t)$, feeds on species $n+1$, and serves as food for species $n-1$.

The decay of initial discontinuity for the lattice (\ref{VL}) is formulated as a Cauchy problem for all $n\in{\mathbb Z}$ with initial data tending to different nonzero constants as $n\to\pm\infty$, possibly with different limit values for even and odd $n$. The asymptotic behavior of such solutions was studied in \cite{Vereshchagin_1997}, where it was shown that it is described by modulated cnoidal waves. Similar problems for the Toda lattice were studied in  \cite{Guseinov_Khanmamedov_1999, Monvel_Egorova_2000, Egorova_Michor_Teschl_2009}.

The shock wave problem for (\ref{VL}) is formulated on a half-line as follows. Let the system be at rest, and at time $t=0$ one variable changes and is subsequently fixed: $u_0(t)=\operatorname{const}$. This results in two non-interacting subsystems for $n>0$ and $n<0$, and the task is to describe their evolution. We present some numerical results for this problem in Section \ref{s:shock}, where we compare it with a closely related problem on shock waves in the Toda lattice, studied in \cite{Holian_Straub_1978, Holian_Flaschka_McLaughlin_1981, Venakides_Deift_Oba_1991, Kamvissis_1993, Deift_Kamvissis_Kriecherbauer_Zhou_1996, Deift_McLaughlin_1998}. However, the main part of our paper is devoted to the simpler limiting case with zero boundary $u_0(t)=0$, which, as is easy to see, amounts to solving the initial-boundary value problem
\begin{gather}
\label{VL+}
 \dot u_n=u_n(u_{n+1}-u_{n-1}),\qquad n=1,2,\dotsc,\qquad t\in\mathbb{R},\\
\label{VL0}
 u_0(t)=0,\qquad u_1(0)=u_3(0)=\dotsb=\alpha,\qquad u_2(0)=u_4(0)=\dotsb=1/\alpha,
\end{gather}
where $\alpha>0$ is a constant parameter. Indeed, given the invariance under reflection $u_n(t)\leftrightarrow u_{-n}(-t)$, we can restrict ourselves to considering the dynamics only for $n>0$, but for all $t$. A more general equilibrium position $u_{2k-1}=\alpha_1>0$, $u_{2k}=\alpha_2>0$ is reduced to the initial data (\ref{VL0}) by scaling; this normalization will be convenient for what follows. It turns out that under the zero boundary, shock waves are not formed and the solution enters a self-similar regime, the description of which is our main goal.

The problem (\ref{VL+}), (\ref{VL0}) has already been studied in papers \cite{Kulaev_Shabat_2019, Adler_Shabat_2018, Adler_Shabat_2019, Adler_2024}, where it was shown to be exactly solvable, in a sense. In particular, it was proved that $u_n$ for fixed $n$ satisfies Painlev\'e type ODE in $t$ (equivalent to P-V), and it was shown that $u_1$ is expressible in terms of the confluent hypergeometric function ${}_1F_1$. For the remaining $u_n$, there exist explicit formulas in terms of Hankel type determinants of this function. However, the size of determinants grows as $n/2$, which makes these formulas ineffective even for very small $n$. They do not give an idea of how the solution behaves as a whole, that is, what the solution profile looks like as a function of $n$ and how it changes depending on $t$.

The numerical solution is shown in Fig.~\ref{fig:solution}, where we have chosen $\alpha=1$ for sake of simplicity (graphs for $\alpha\ne1$ are given in the next section). The observed behavior of the system is intuitively clear from the ecological interpretation given above. At time $t=0$, species $n=0$ is completely extinct due to an external influence, which means that species $n=1$ loses enemies but has food, hence $u_1$ begins to increase with increasing $t$. This means an increase in the enemy population for species $n=2$, hence $u_2$ begins to decrease. In turn, this leads to an increase of $u_3$, a decrease of $u_4$, and so on. The system tends to a new equilibrium state corresponding to the disintegrated food lattice: $u_{2k-1}=4c$, $u_{2k}=0$, with some constant $c$, but this tendency is not uniform in $n$. For large $n$, almost nothing changes for a long time (it is easy to prove that for the initial data (\ref{VL0}) the derivatives $u^{(j)}_n(0)$ vanish for all $1\le j<n$), so that $u_n$ begins to deviate noticeably from the initial position only after some time. As a result, a binary decay wave is formed, propagating with some velocity $2v$. For negative $t$, the reasoning is similar, but now the food is the neighbor to the left, not to the right. At the moment $t=0$, species $n=1$ is deprived of food and begins to die out; this leads to the extinction of species $n=2$, and so on. As a result, in the decay wave propagating as $t\to-\infty$, all $u_n$ decrease.

\begin{figure}[t!]
\centerline{\includegraphics[width=0.95\textwidth]{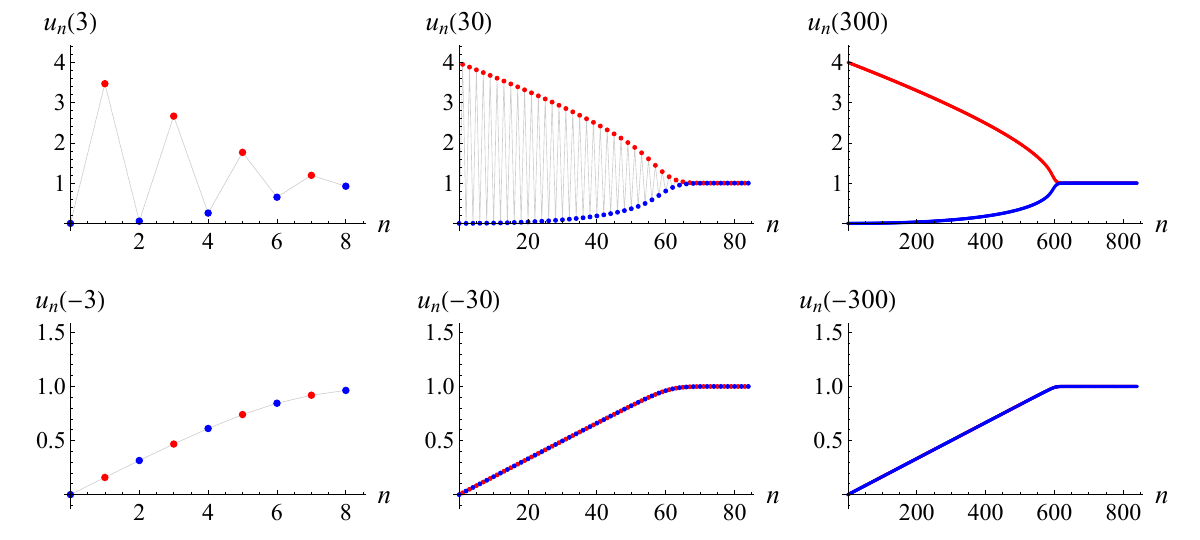}}
\caption{Solution for $\alpha=1$, at several moments $t$. Here and in the following figures, the red dots correspond to odd $n$ and the blue dots to even $n$.}
\label{fig:solution}
\end{figure}

In \cite{Adler_Shabat_2018, Adler_2024}, asymptotic formulas in inverse powers of $t$ were derived for $u_n(t)$ for fixed $n$; they describe well only the region near $n=0$ (the trailing edge of the wave). From these formulas, a new limiting equilibrium position $4c$ is determined (in particular, $c=1$ for the solution in top Fig.~\ref{fig:solution}), but not the velocity $2v$ with which the front edge propagates. The goal of our work is to obtain asymptotic expansions which are uniformly suitable for all $n$. This is possible because the solution becomes asymptotically self-similar for large $t$. Indeed, Fig.~\ref{fig:solution} suggests that the solution profiles at different moments $t$ approximately coincide if we bring them to the same scale along the $n$ axis by passing to the self-similar variable $x=n/(2vt)$. The graph of the solution as a function of $x$ tends to a certain limit curve as $t$ increases. In other words, the solution is asymptotically described by a self-similar formula of the form
\[
 u_n(t)\sim f_0\left(\frac{n}{2vt}\right)~~\text{for odd}~n,\qquad 
 u_n(t)\sim g_0\left(\frac{n}{2vt}\right)~~\text{for even}~n.
\]
Our goal is to determine the velocity $2v$, the functions $f_0$ and $g_0$, and the correction terms.

\begin{remark} 
Throughout this article, the asymptotic expansions as $t\to\pm\infty$ of solutions to the initial-boundary value problem for VL are understood to be exclusively formal asymptotic solutions \cite[Introduction]{Il'in}. For brevity, we will sometimes use the abbreviation FAS.
\end{remark}

Let us briefly outline the content of the paper. In Section \ref{s:series}, we introduce formal asymptotic expansions in inverse powers of $t$, describing the solution as $t\to\pm\infty$ in the self-similar sector. Proposition \ref{th:fg-series}, which is one of the main results of the paper, contains formulas for the leading terms and first-order corrections. The leading terms, which determine the form of the wave, are calculated quite simply; the constants included in them are refined in Section \ref{s:leading} by simple geometric considerations. The derivation of the subsequent coefficients is more difficult and is performed in two steps. Section \ref{s:a1} utilizes the fact that for $\alpha=1$ the solution under study satisfies a certain difference equation consistent with the dynamics in $t$, which allows us to derive a recurrence scheme for calculating the coefficients of the asymptotic series corresponding to this solution. In Section \ref{s:ab}, we apply a substitution between the Volterra and Toda lattices, which makes possible to recalculate the obtained answer for arbitrary $\alpha$. This scheme can be used to calculate any number of FAS coefficients, but in Proposition \ref{th:fg-series} we restrict ourselves to the first two, as the subsequent ones become quite cumbersome.

Numerical experiments (see Section \ref{s:series}) show that the resulting formulas approximate the solution very well, with the exception of a relatively small transition region near the front edge of the wave at $x=1$, where the series coefficients are singular. The asymptotics in this transition region are constructed by passing to a new stretched variable according to the matching method \cite{Il'in}, which leads to series in $t^{-1/3}$. According to the accepted terminology, the asymptotic series in the transition region are called the inner expansion, and the series in the original variables are called the outer expansion. Calculations using the matching method are described in sections \ref{s:x1a1} and \ref{s:x1}, the main result of which are formulas for the inner expansion up to the order $O(t^{-2/3})$. Unlike the outer expansion, whose coefficients are given by elementary functions, the inner expansion requires the use of the special Hastings--McLeod solution of the second Painlev\'e equation \cite{Hastings_McLeod_1980}. This transcendental function arises in the description of the asymptotics despite the fact that for any finite $n$ the solution of the Volterra lattice, as already noted, is expressed in terms of hypergeometric functions.

\begin{remark} 
It is known that transcendental solutions of Painlev\'e equations also appear in the asymptotic formulas for large values of the discrete parameter, even for solutions of discrete Painlev\'e equations which are rational for each individual value of this parameter, see e.g.~\cite{Miller_Sheng_2017}.
\end{remark}

In Section \ref{s:shock}, we discuss shock waves arising in the case of a nonzero boundary condition $u_0(t)=\beta$. Here, various regimes are possible, including those without a self-similar sector, but numerical solutions suggest that if this sector is present, then the leading asymptotic terms in it are the same as in the case of $\beta=0$, only in a narrower interval.

In Section \ref{s:gVL}, we consider a generalization of the problem for lattices of the form $\dot u_n=r(u_n)(u_{n+1}-u_{n-1})$. Apparently, the formation of self-similar decay waves is a fairly universal behavior for them. Recall that, according to Yamilov classification \cite{Yamilov_1984, Yamilov_2006}, lattice equations of this form are integrable (in the sense of existence of higher symmetries, conservation laws, multisoliton solutions, and Lax representations that allow the application of the inverse scattering method) if and only if $r(u)$ is a polynomial of degree no higher than two, which includes the Volterra lattice itself and its modification. In our paper, the fact of integrability of VL is not explicitly used, although some properties that we exploit (the constraint consistent with the dynamics and the substitution into the Toda lattice) are very closely related to integrability.

\section{Asymptotic series}\label{s:series} 

The derivation of the asymptotics is based on the a priori assumption that the decay wave has a self-similar form and propagates with a constant velocity $2v$. We assume that $v>0$ for evolution at $t>0$ and $v<0$ for evolution at $t<0$. Let us introduce the variable
\begin{equation}\label{x}
 x=\frac{n}{2vt},\qquad x\in(0,1)   
\end{equation}
and look for a solution of the Volterra lattice (\ref{VL}) (for now ignoring the boundary and initial conditions) in the form
\begin{equation}\label{uFG}
 u_n(t)= \left\{\begin{aligned} 
  & F(x,t)&& \text{for odd }n,\\
  & G(x,t)&& \text{for even }n,
 \end{aligned}\right.\qquad 0<n<2vt, 
\end{equation}
assuming that $F(x,t)$ and $G(x,t)$ are continuously differentiable with respect to $t$ and analytic with respect to $x$. This leads to the system of functional-differential equations
\begin{equation}\label{FGsys0}
\begin{aligned}
 F_t-\frac{x}{t}F_x &= F\left(G\Bigl(x+\frac{1}{2vt},t\Bigr)
  -G\Bigl(x-\frac{1}{2vt},t\Bigr)\right),\\
 G_t-\frac{x}{t}G_x &= G\left(F\Bigl(x+\frac{1}{2vt},t\Bigr)
  -F\Bigl(x-\frac{1}{2vt},t\Bigr)\right).
\end{aligned}
\end{equation}
Applying the Taylor expansion at $x$, we rewrite this as follows:
\begin{equation}\label{FGsys}
\begin{aligned}
 v(tF_t-xF_x) &= F\left(G_x+\frac{G_{xxx}}{6(2vt)^2}
  +\dotsb+\frac{\partial^{2j+1}_x(G)}{(2j+1)!(2vt)^{2j}}+\dotsb\right),\\
 v(tG_t-xG_x) &= G\left(F_x+\frac{F_{xxx}}{6(2vt)^2}
  +\dotsb+\frac{\partial^{2j+1}_x(F)}{(2j+1)!(2vt)^{2j}}+\dotsb\right).
\end{aligned}
\end{equation}
Further, let us assume that for $t\to\infty$ the asymptotic expansions are applicable
\begin{equation}\label{FGseries}
 F=f_0(x)+\frac{f_1(x)}{t}+\frac{f_2(x)}{t^2}+\dotsb,\qquad 
 G=g_0(x)+\frac{g_1(x)}{t}+\frac{g_2(x)}{t^2}+\dotsb,
\end{equation}
admitting term-by-term differentiation. The expansions for $t\to+\infty$ and $t\to-\infty$ are obviously different (since the solution itself behaves differently), so notations like $f^\pm_j$ and $g^\pm_j$ should be used, but we will not do this to avoid cluttering the formulas; instead, we indicate the sign of $t$ when necessary. Substituting the series into (\ref{FGsys}) and collecting constant terms, we obtain a system of ODEs for $f_0$ and $g_0$:
\begin{equation}\label{fg0sys}
 vxf'_0+f_0g'_0=0,\qquad g_0f'_0+vxg'_0=0.
\end{equation}
By collecting the coefficients of $t^{-1}$, we obtain
\begin{equation}\label{fg1sys}
 vxf'_1+f_0g'_1+(g'_0+v)f_1=0,\qquad g_0f'_1+vxg'_1+(f'_0+v)g_1=0;
\end{equation}
the coefficients of $t^{-2}$ give
\begin{equation}\label{fg2sys}
\begin{aligned}
 & vxf'_2+f_0g'_2+(g'_0+2v)f_1+f_1g'_1+\frac{1}{24v^2}f_0g'''_0=0,\\ 
 & g_0f'_2+vxg'_2+(f'_0+2v)g_2+f'_1g_1+\frac{1}{24v^2}f'''_0g_0=0,    
\end{aligned}
\end{equation}
and so on: it is easy to show that in the order $t^{-j}$ a system arises
\[
 \begin{pmatrix}
  vx & f_0\\
  g_0 & vx
 \end{pmatrix}\begin{pmatrix} f'_j \\ g'_j\end{pmatrix}
 +\begin{pmatrix}
  g'_0+jv & 0\\
  0 & f'_0+jv   
\end{pmatrix}\begin{pmatrix} f_j \\ g_j\end{pmatrix}=\dotsc,
\]
where the right-hand side is a vector with components which are differential polynomials in $f_i$ and $g_i$ for $i<j$. The matrix at the derivatives on the left-hand side is the same for all $j$. It must be singular, since otherwise the system (\ref{fg0sys}) for the leading coefficients has only a constant solution, which is of no interest. Despite this, the leading coefficients themselves are determined almost uniquely.

\begin{proposition}\label{th:fg}
Any non-constant solution of the system \textnormal{(\ref{fg0sys})} has the form
\begin{equation}\label{fgc}
 f_0(x)=2c-vx+2\sigma\sqrt{c(c-vx)},\qquad  g_0(x)=2c-vx-2\sigma\sqrt{c(c-vx)},
\end{equation}
with some constant $c$ and $\sigma=\pm1$.
\end{proposition}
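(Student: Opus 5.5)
The plan is to treat (\ref{fg0sys}) as a linear homogeneous system for the vector $(f'_0,g'_0)$ with matrix $\begin{pmatrix} vx & f_0\\ g_0 & vx\end{pmatrix}$. If this matrix is nonsingular on an open interval, then $f'_0=g'_0=0$ there. By analyticity in $x$ the solution is then constant everywhere. So for a non-constant solution the determinant must vanish identically:
\[
 f_0g_0=v^2x^2 .
\]
Moreover, the two equations are then proportional, and we may keep the first one, $vxf'_0=-f_0g'_0$. On any interval where $f_0\ne0$ we have $g_0=v^2x^2/f_0$. Substituting this gives
\[
 vxf'_0=-f_0\Bigl(\frac{2v^2x}{f_0}-\frac{v^2x^2f'_0}{f_0^2}\Bigr),
\]
which simplifies, for $x\ne0$, to $f'_0(f_0-vx)=-2vf_0$.

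Rather than integrate this directly, I would pass to the symmetric variables $s=f_0+g_0$ and $p=f_0g_0=v^2x^2$. Adding the two equations of (\ref{fg0sys}) gives $vx s'+(f_0g_0)'=0$, since $f_0g'_0+g_0f'_0=p'$. Hence
\[
 vxs'=-2v^2x,\qquad\text{so}\qquad s=-2vx+C .
\]
Write the constant as $C=4c$. Then $f_0$ and $g_0$ are the two roots of
\[
 z^2-(4c-2vx)z+v^2x^2=0,
\]
namely
\[
 z=2c-vx\pm\sqrt{(2c-vx)^2-v^2x^2}=2c-vx\pm2\sqrt{c(c-vx)} .
\]
This is exactly (\ref{fgc}), with $\sigma$ selecting which root is $f_0$.

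A check is still needed that the sum equation plus the determinant condition really recover the original system. Given $p=v^2x^2$ and $s'=-2v$, the difference equation $vx(f_0-g_0)'+f_0g'_0-g_0f'_0=0$ should follow. I would verify this directly with the explicit roots: $f_0-g_0=4\sigma\sqrt{c(c-vx)}$, and the term $f_0g'_0-g_0f'_0$ can be computed from the formulas. Alternatively, I would note that on $\det=0$ the two rows are proportional, so the first equation together with the sum suffices.

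The main obstacle is the careful handling of degenerate points. These include zeros of $f_0$ (on such intervals $g_0$ cannot be recovered from $p/f_0$), the point $x=0$, and the branch point $x=c/v$. Continuity and analyticity of $f_0,g_0$ are needed to argue that $\sigma$ cannot switch between intervals, except possibly at $x=c/v$, where the square root vanishes. For that last point I would appeal to analyticity in $x$ on the domain under consideration.
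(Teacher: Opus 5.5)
Your proposal is correct and takes the same route as the paper. The vanishing determinant gives $f_0g_0=v^2x^2$, adding the two equations of (\ref{fg0sys}) gives $f_0+g_0=4c-2vx$, and solving the resulting quadratic yields (\ref{fgc}), with a direct check that these formulas satisfy the system. Your extra care about analyticity and the degenerate points, and the unused detour through $g_0=v^2x^2/f_0$, only spell out what the paper leaves implicit.
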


\begin{proof}
A non-constant solution is only possible if 
\begin{equation}\label{fg1}
 f_0g_0=v^2x^2.
\end{equation}
On the other hand, adding up the equations (\ref{fg0sys}), we obtain $vx(f'_0+g'_0)=-f_0g'_0-f'_0g_0=-2v^2x$. It follows that $f'_0+g'_0=-2v$, that is,
\begin{equation}\label{fg2}
 f_0+g_0=4c-2vx.
\end{equation}
By solving the obtained algebraic equations, we arrive at (\ref{fgc}), and it is directly proved that these formulas satisfy (\ref{fg0sys}).
\end{proof}

As for the subsequent coefficients $f_j$ and $g_j$, it is impossible to find them unambiguously from the equations written out. In particular, it is easy to prove, taking into account formulas (\ref{fg1}) and (\ref{fg2}), that equations (\ref{fg1sys}) are equivalent to a single algebraic relation
\[
 vxf_1+f_0g_1=0,
\]
that is, one of the functions $f_1$ or $g_1$ is arbitrary. The equations (\ref{fg2sys}) are reduced to equations of the form $f_2+g_2=\dots$ and $f'_2=\dots$ with right-hand sides expressed in terms of the previous coefficients; therefore, at this step, the coefficients are found up to one integration constant. The 
same is true for further coefficients: the equations for $f_j$ and $g_j$ for $j>2$ also reduce to an algebraic equation and a differential one.

Thus, the coefficients of the series (\ref{FGseries}) contain, generally speaking, one arbitrary function and an infinite set of integration constants. This is to be expected, since so far we have not taken into account the initial data for the solution that should be represented by these series. The leading terms (\ref{fgc}) turn out to be universal, but it is quite natural that the subsequent corrections depend on the solution under consideration. It turns out that for the special initial-boundary data of interest to us, the coefficients $f_1$ and $g_1$ (and, in principle, all subsequent ones) can be calculated exactly.

\begin{proposition}\label{th:fg-series}
For the initial boundary value problem \textnormal{(\ref{VL+}), (\ref{VL0}):}

1) if $t>0$ then
\[
 v=1,\qquad 4c=\alpha+\alpha^{-1}+2,\qquad x=n/(2t)
\] 
and the first two coefficients of FAS \textnormal{(\ref{FGseries})} are
\begin{gather}
\label{fg0p}
 f_0(x)=2c-x+2\sqrt{c(c-x)},\qquad 
 g_0(x)=2c-x-2\sqrt{c(c-x)},\\[5pt]
\label{fg1p}
\begin{aligned}
 f_1(x)&=\frac{\bigl(\sqrt{1-x}-2+2x\bigr)\bigl(\sqrt{c(c-x)}+c-x\bigr)-x(1-x)}
  {4(c-x)(1-x)},\\
 g_1(x)&=-\frac{\sqrt{1-x}\bigl(\sqrt{c(c-x)}-c+x\bigr)-x(1-x)}{4(c-x)(1-x)};
\end{aligned}
\end{gather} 

2) if $t<0$ then 
\[
 v=-1,\qquad 4c=\alpha+\alpha^{-1}-2,\qquad x=n/(-2t)
\] 
and the first two coefficients of FAS \textnormal{(\ref{FGseries})} are
\begin{gather}
\label{fg0m}
 f_0(x)=2c+x+2\sigma\sqrt{c(c+x)},\qquad 
 g_0(x)=2c+x-2\sigma\sqrt{c(c+x)},\\[5pt]
\label{fg1m}
\begin{aligned}
 f_1(x)&=\frac{\bigl(1-2\sqrt{1-x}\bigr)\bigl(x+2c+2\sigma\sqrt{c(c+x)}\bigr)+x}
  {8(c+x)\sqrt{1-x}},\\
 g_1(x)&=\frac{c+x-x\sqrt{1-x}-\sigma\sqrt{c(c+x)}}{4(c+x)\sqrt{1-x}}
\end{aligned}
\end{gather} 
where $\sigma=1$ if $\alpha\ge1$ and $\sigma=-1$ if $\alpha<1$.    
\end{proposition}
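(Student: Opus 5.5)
Given the outline in the introduction, the plan follows the paper's own route: fix the constants $v$ and $c$ by geometric considerations, then compute $f_1,g_1$ first for $\alpha=1$ using an extra constraint compatible with the dynamics, and finally transfer the result to arbitrary $\alpha$ through a Volterra--Toda substitution. All the work rests on Proposition~\ref{th:fg}: the leading terms are already known up to $v$, $c$ and $\sigma$.

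\textbf{Leading terms.} First I pin down $v$, $c$ and $\sigma$ in (\ref{fgc}).
\begin{itemize}
\item At the front edge $x=1$ the self-similar profile must match the unperturbed equilibrium $(\alpha,1/\alpha)$. For $t>0$ I require that the square root in (\ref{fgc}) vanishes there, i.e.\ $c=v$, and that $f_0(1)+g_0(1)=4c-2v=\alpha+\alpha^{-1}$ and $f_0(1)g_0(1)=v^2=1$, from (\ref{fg1}) and (\ref{fg2}). This gives $v=1$ and $4c=\alpha+\alpha^{-1}+2$. The odd profile rising to $4c$ at $x=0$ (the trailing-edge limit known from \cite{Adler_Shabat_2018,Adler_2024}) fixes $\sigma=+1$.
\item For $t<0$ the same matching with $v=-1$ gives $4c=\alpha+\alpha^{-1}-2=(\sqrt\alpha-1/\sqrt\alpha)^2$. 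Here $f_0(1)=\alpha$ and $g_0(1)=1/\alpha$ read $2c+1+2\sigma\sqrt{c(c+1)}=\alpha$. Since $\sqrt{c(c+1)}=|\alpha-\alpha^{-1}|/4$, this forces $\sigma=\operatorname{sign}(\alpha-1)$.
\item An alternative check: (\ref{fg1}) and (\ref{fg2}) at $x=1$ give the tangency (double root) condition, which should also justify why $x=1$ is the front.
\end{itemize}

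\textbf{First corrections for $\alpha=1$.} Next I determine the free function in $vxf_1+f_0g_1=0$. I would use the difference constraint consistent with the flow that the solution satisfies for $\alpha=1$: a discrete string-type equation, linear in $t$, relating $u_{n\pm1}$, $u_n$ and $n$, obtained from \cite{Adler_Shabat_2018,Adler_2024}.
\begin{itemize}
\item Substitute the ansatz (\ref{uFG}) and the series (\ref{FGseries}) into the constraint and Taylor expand in $1/(2vt)$.
\item The order $t^0$ should reproduce (\ref{fg1}) and (\ref{fg2}).
\item The order $t^{-1}$ then gives a second algebraic relation for $f_1,g_1$. Together with $vxf_1+f_0g_1=0$ this determines $f_1,g_1$ uniquely.
\item Check that the formulas (\ref{fg1p}) and (\ref{fg1m}) reduce correctly at $c=1$ (resp.\ $c=0$, where $t<0$ degenerates) and satisfy (\ref{fg1sys}).
\end{itemize}

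\textbf{General $\alpha$.} To pass from $\alpha=1$ to arbitrary $\alpha$, I would use the Miura-type map from VL to the Toda lattice. The two sublattices $(u_{2k-1},u_{2k})$ give Toda variables $b_k=u_{2k-1}+u_{2k}$ and $a_k=u_{2k}u_{2k+1}$. The initial data (\ref{VL0}) for different $\alpha$ then correspond to equilibria of Toda differing by a shift of $b$, which Toda absorbs by a Galilean-type gauge. So the Toda solution for general $\alpha$ is the $\alpha=1$ one transformed, and inverting the map expresses $f_1,g_1$ for general $\alpha$.

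\textbf{Main obstacle.} Inverting the substitution is where I expect the real difficulty. It requires solving for $u$ from $(a,b)$ order by order, with square roots appearing, and this introduces the $\sqrt{1-x}$ factors in (\ref{fg1p}) and (\ref{fg1m}), which come from the $\alpha=1$ profile. Tracking consistently which quantities carry the $\alpha=1$ variable $c=1$ and which carry the general $c$ is the delicate part. As a safeguard, I would also verify the final expressions against (\ref{fg1sys}) directly and against numerics.
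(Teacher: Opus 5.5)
Your plan follows the paper's architecture: constants from edge matching, the case $\alpha=1$ from a constraint compatible with the flow, and general $\alpha$ from the shift invariance of the Toda lattice. However, two steps are wrong as stated.

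\textbf{The front at $x=1$.} You require the square root in (\ref{fgc}) to vanish at $x=1$, i.e.\ $c=v$, and your ``alternative check'' treats $x=1$ as a tangency (double-root) point. Both hold only for $\alpha=1$. With $c=v=1$ the sum condition gives $\alpha+\alpha^{-1}=4c-2v=2$, which contradicts the value $4c=\alpha+\alpha^{-1}+2$ you then state. For $t>0$ the vertex of the parabola is at $x=c\ge1$, and at $x=1$ the branches are distinct, $f_0(1)=\mu\ne1/\mu=g_0(1)$. For $t<0$ the vertex is at $x=-c\le0$.

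Drop that condition and use only the symmetric functions $f_0(1)g_0(1)=v^2$ and $f_0(1)+g_0(1)=4c-2v$, matched to $1$ and $\alpha+\alpha^{-1}$. For $t>0$ you cannot match the individual values to $(\alpha,1/\alpha)$: when $\alpha<1$ one has $f_0(1)=1/\alpha$, and the return to the initial state happens through a kink at $x_*>1$. The edge $x=1$ is singled out only because $2v$ is by definition the front speed, with $v^2=f_0(1)g_0(1)$ equal to the background product $\alpha\cdot\alpha^{-1}$. The $\sqrt{1-x}$ singularities of $f_1,g_1$ are inherited from the $\alpha=1$ Toda coefficients, not from any tangency. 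Your choice of $\sigma$ for $t<0$ from $f_0(1)=\alpha$ gives the right answer, but it presupposes that no inversion occurs at the front for $t<0$. The paper instead reads $\sigma$ off the ordering of $u_1,u_2$, which persists as $t\to-\infty$.

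\textbf{Uniqueness of $f_1,g_1$ for $t<0$, $\alpha=1$.} You claim the next order of the constraint together with $vxf_1+f_0g_1=0$ determines $f_1,g_1$ uniquely. This fails in the degenerate case $t<0$, $\alpha=1$ ($c=0$, $f_0=g_0=x$). Your route to part 2 has to pass through this case, since you use no $t\mapsto-t$ symmetry. There $vxf_1+f_0g_1=0$ only says $f_1=g_1$. The constraint $(tu_{n-1}+tu_n+n)(tu_n+tu_{n+1}+n+1)u_n=(2tu_n+n)(2tu_n+n+1)$ is identically satisfied at orders $t^2$ and $t^1$. At order $t^0$ it gives the quadratic $x(2f_1+\tfrac12)^2=2f_1(2f_1+1)$, with the two roots $f_1=-\tfrac14\bigl(1\mp(1-x)^{-1/2}\bigr)$.

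You therefore need a selection rule. The paper takes the positive root, since $F$ approaches $x$ from below when $t<0$. Alternatively, use the reflection symmetry $\hat a_k(t)=2(\alpha+\alpha^{-1})-a_k(-t)$, $\hat b_k(t)=b_k(-t)$ of the Toda problem, with $a_k=u_{2k-1}+u_{2k}$ and $b_k=u_{2k}u_{2k+1}$ (the reverse of your naming). This is how the paper obtains the $t\to-\infty$ Toda coefficients, and it transports the $t\to+\infty$ series without ambiguity. Also, inverting $A_1=f_1+g_1+\dots$, $B_1=g_0f_1+f_0g_1+\dots$ requires $f_0\ne g_0$. So the entry $\alpha=1$, $t<0$ of the statement must be handled separately, using $f_1=g_1$.
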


From here on, a square root of a positive number is always understood to be positive. Note that all roots involved in the above formulas are correctly defined on the interval $x\in[0,1]$, since the inequality $c\ge 1$ holds for $t>0$, and the inequality $c\ge0$ holds for $t<0$.

The proof of the validity of Proposition \ref{th:fg-series} is carried out in several stages. In section \ref{s:leading}, we analyze the formulas (\ref{fgc}) and determine the constants $v$, $c$, and the sign of $\sigma$ included in them, which completely determines the leading coefficients $f_0$ and $g_0$. In section \ref{s:a1}, corrections are calculated for the case $\alpha=1$. This is possible because the solution with the initial data under consideration satisfies a certain constraint which is consistent with the dynamics with respect to $t$. This constraint gives an explicit recurrence scheme for computing $f_j$ and $g_j$, which turn out to be rational functions of $\sqrt{1-x}$. Several coefficients are given in Proposition \ref{th:alpha1-series}. Section \ref{s:ab} gives a calculation for the case $\alpha\ne1$, which is possible due to the B\"acklund transformation connecting it with the case $\alpha=1$. This also gives explicit recursive formulas for $f_j$ and $g_j$. They turn out to be more cumbersome rational functions of the two roots $\sqrt{1-x}$ and $\sqrt{c(c-vx)}$, so in Proposition \ref{th:fg-series} we restrict ourselves to the first corrections. As a side result, Proposition \ref{th:ab-series} provides an asymptotic expansion for the Toda lattice.

\begin{figure}[t!]
\centerline{\includegraphics[width=0.9\textwidth]{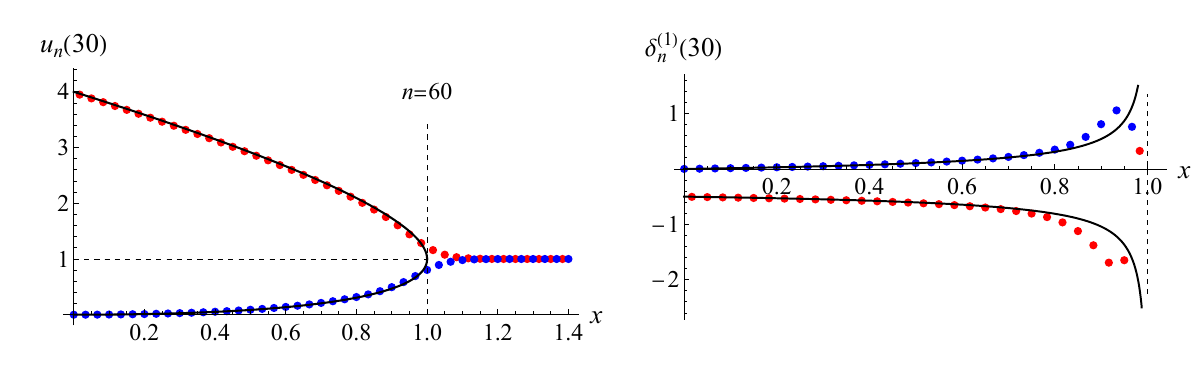}}
\centerline{\includegraphics[width=0.9\textwidth]{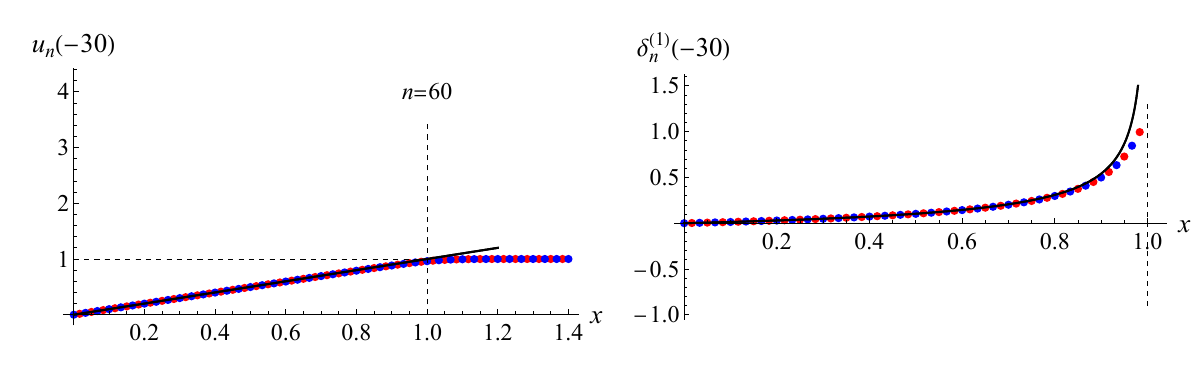}}
\caption{Numerical solution and the first correction term for $\alpha=1$.}
\label{fig:gr1} 
\end{figure}

\begin{figure}[t!]
\centerline{\includegraphics[width=0.9\textwidth]{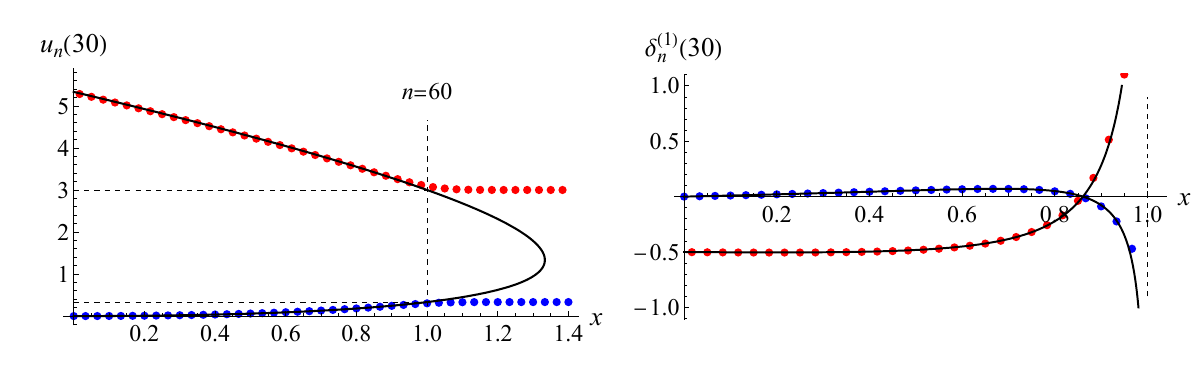}}
\centerline{\includegraphics[width=0.9\textwidth]{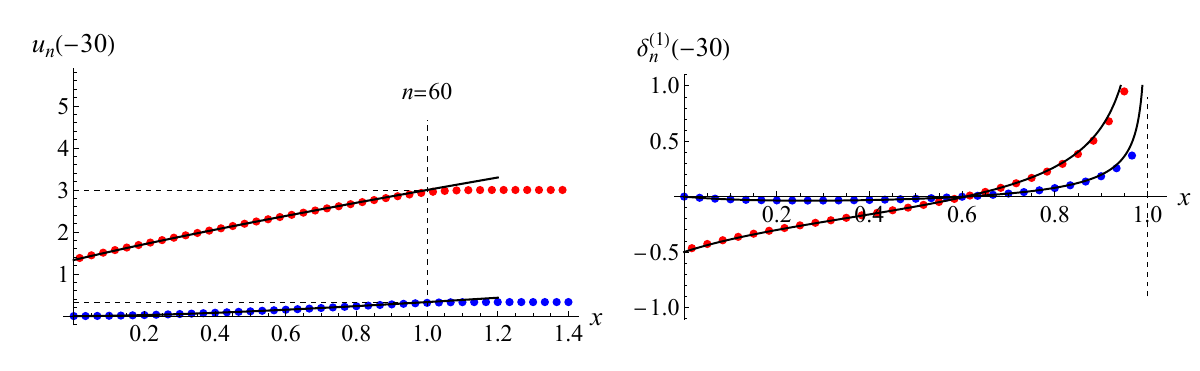}}
\caption{Numerical solution and the first correction term for $\alpha=3$.}
\label{fig:gr3}
\end{figure}

\begin{figure}[t!]
\centerline{\includegraphics[width=0.9\textwidth]{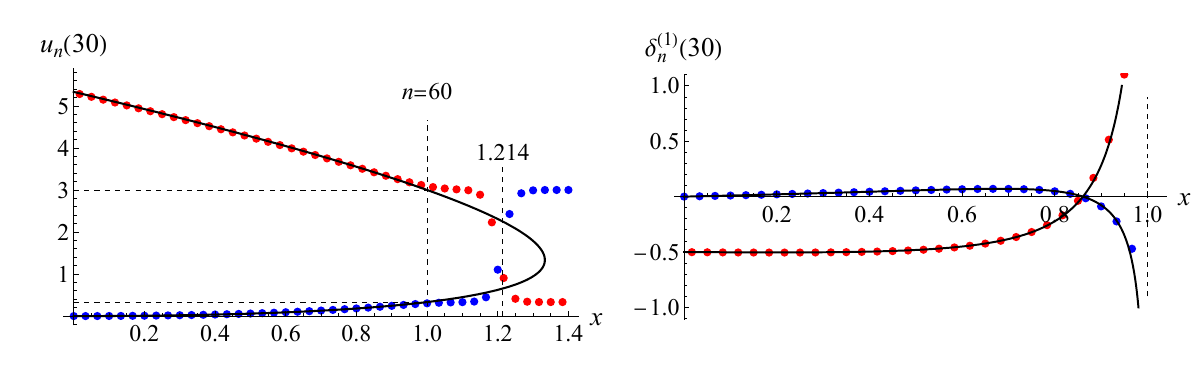}}
\centerline{\includegraphics[width=0.9\textwidth]{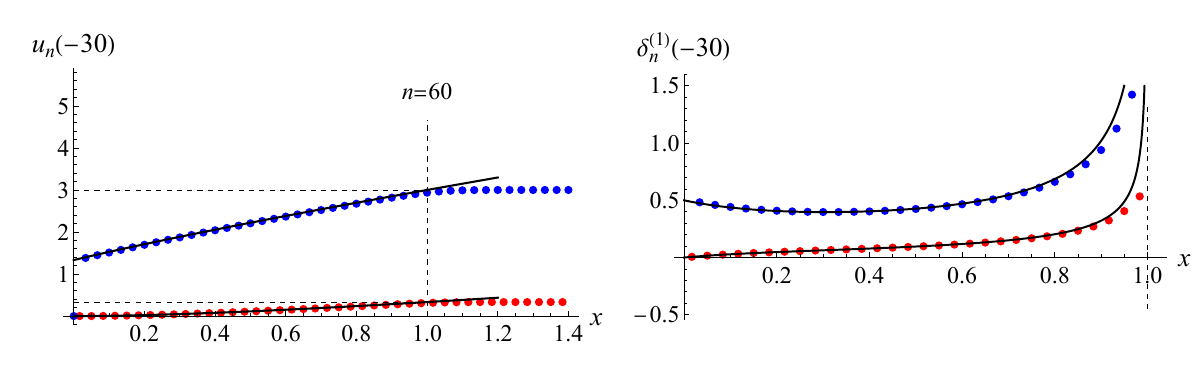}}
\caption{Numerical solution and the first correction term for $\alpha=1/3$.}
\label{fig:gr13} 
\end{figure}

Figs.~\ref{fig:gr1}, \ref{fig:gr3} and \ref{fig:gr13} show the results of numerical calculations for the values $\alpha=1$, $3$ and $1/3$, respectively, illustrating all the qualitatively different solution types. Graphs on the left present the solutions themselves for $t=\pm30$, referred to the variable $x$. The black curves are the plots of the leading terms $f_0(x)$ and $g_0(x)$, which, as we see, closely approximate the solution in the range $0\le x<1$ even for such comparatively small values of $t$. Graphs on the right present, for the same moments of $t$, the quantities
\[
 \delta^{(1)}_n(t)= \left\{\begin{array}{ll}
   t(u_n(t)-f_0(x)) & \text{for odd}~n,\\
   t(u_n(t)-g_0(x)) & \text{for even}~n,\end{array}\right.
\]
combined with the plots of the functions $f_1(x)$ and $g_1(x)$. Here, too, excellent agreement is observed. We emphasize that it makes sense to consider these functions only for values $x<1$, for which the expansions (\ref{FGseries}) are constructed. At the point $x=1$, the corrections themselves have a singularity, which means that a detailed description of solutions near this point requires other expansions. This analysis is carried out further in section \ref{s:x1a1} for the case $\alpha=1$ and in section \ref{s:x1} for arbitrary $\alpha$.

\section{Analysis of the leading terms}\label{s:leading} 

The constants $c$, $v$, and the sign $\sigma$ in (\ref{fgc}) are refined by considering the limiting values for $f_0(x)$ and $g_0(x)$ as $x\to0$ and $x\to1$, under the assumption that the self-similar ansatz (\ref{uFG}) is applicable on the entire interval $x\in(0,1)$ and that near the point $n=2vt$, that is, at $x=1$, the solution reaches the sequence of initial data consisting of alternating $\alpha$ and $\alpha^{-1}$. Substituting these values into the equalities (\ref{fg1}) and (\ref{fg2}) gives
\[
 v^2=f_0(1)g_0(1)=1,\quad 4c=f_0(1)+g_0(1)+2v=\alpha+\alpha^{-1}+2v,
\]
which leads to the formulas for $v$ and $c$ given in Proposition \ref{th:fg-series}.

The graphs of $f_0(x)$ and $g_0(x)$ are two branches of a parabola lying sideways and tangent to the $x$-axis at the origin. If $t>0$ then $v>0$ and the branches of the parabola point to the left, while the rightmost point of the parabola has the abscissa $x=c$; if $t<0$ then $v<0$ and the branches point to the right. In the case $\alpha=1$ and $t<0$, the parabola degenerates into a straight line. We are interested only in the part of these graphs between the lines $x=0$ and $x=1$.

If $\sigma=1$, then $f_0$ corresponds to the upper branch of the parabola and $g_0$ to the lower one; if $\sigma=-1$, then vice versa. To determine the sign of $\sigma$, consider the values of the functions at $x=0$:
\[
 f_0(0)=2(c+\sigma\sqrt{c^2}),\qquad g_0(0)=2(c-\sigma\sqrt{c^2}).
\]
One number is $0$ and the other is $4c\ge0$. Here we have to consider several cases depending on the sign of $t$ and the value of the parameter $\alpha$. 

\begin{figure}[t]
\centerline{
\includegraphics[height=0.2\textheight]{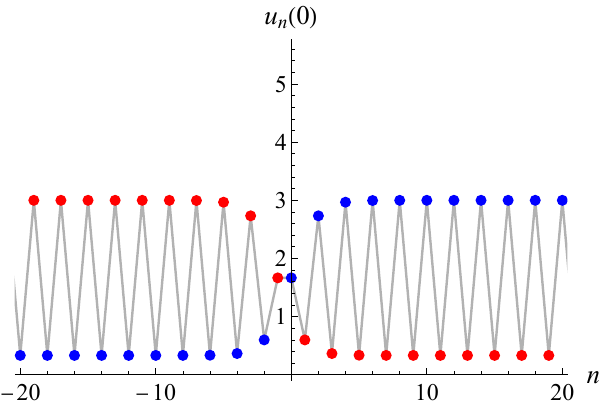}\qquad
\includegraphics[height=0.2\textheight]{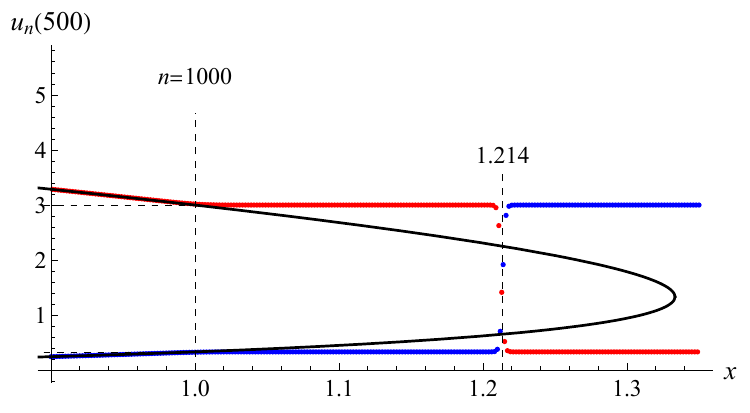}}
\caption{Left: bipartite kink (\ref{VL_kink}) for $t=0$, $\alpha=1/3$ and $\delta=0$.
Right: limiting position of the inversion point for $\alpha=1/3$.}
\label{fig:VL_kink} 
\end{figure}

First, let $t>0$. Since $\dot u_1=u_1u_2>0$, the variable $u_1(t)$ increases monotonically and its limit value $f_0(0)$ cannot be equal to 0. Therefore, $f_0(0)=4c$, $g_0(0)=0$, and $\sigma=1$, which yields expressions (\ref{fg0p}) for $f_0(x)$ and $g_0(x)$. These formulas are invariant with respect to the change $\alpha\to 1/\alpha$ (as well as the expressions for the subsequent coefficients, which will be clear from what follows). According to these formulas, $f_0(1)\ge g_0(1)$, that is $f_0(1)=\mu=\max(\alpha,1/\alpha)$ and $g_0(1)=1/\mu$. If $\alpha\ge1$, then this agrees with the initial data $u_n(0)=\alpha$ for odd $n$ and $u_n(0)=1/\alpha$ for even $n$, and near $x=1$ the solution departs directly to the steady state, as shown in the left top Fig.~\ref{fig:gr3}. If $\alpha<1$, then near $x=1$ the solution approaches ``alien'' constants, and in this case it must switch somewhere to the initial steady state. The numerical solution in the left top Fig.~\ref{fig:gr13} shows that this indeed happens at some point $x_*>1$. Thus, the solutions corresponding to the parameters $\alpha$ and $1/\alpha$ have the same asymptotic expansions at $t\to+\infty$ in the self-similar sector (this explains why the graphs of the first corrections presented in the right top Fig.~\ref{fig:gr3} and Fig.~\ref{fig:gr13} are visually indistinguishable), but differ topologically outside it.

In more detail, the inversion of equilibrium positions for even and odd lattice sites can be described using the fact that the lattice equation (\ref{VL}) admits an exact elementary solution in the form of a bipartite kink (see Fig.~\ref{fig:VL_kink}):
\begin{equation}\label{VL_kink}
 u_{2k}=\frac{1}{u_{2k+1}} = 
 \frac{1+\alpha^{2k+2}e^{\gamma t+\delta}}
      {\alpha(1+\alpha^{2k}e^{\gamma t+\delta})},\qquad 
 \gamma=\alpha^{-1}-\alpha.
\end{equation}
This formula can also be written as
\[
 u_{2k}=\frac{1}{u_{2k+1}}
  =\frac{1}{2}(\alpha^{-1}+\alpha)
   -\frac{\gamma}{2}\tanh\frac{\gamma t+2k\log\alpha+\delta}{2}.
\]
Assuming that $\alpha<1$ we have $u_{2k}=\frac{1}{u_{2k+1}}\to\alpha^{\mp1}$ at $k\to\pm\infty$. The velocity of the kink is
\[
 V_{\text{kink}}=-\frac{\gamma}{\log\alpha}=\frac{\alpha-\alpha^{-1}}{\log\alpha}>0,
\]
that is, it moves to the right, in contrast to the solitons of VL which move to the left. It is natural to assume that the matching of the values $f_0(1)=\alpha^{-1}$ and $g_0(1)=\alpha$ with the limiting values $u_{2k-1}=\alpha$ and $u_{2k}=\alpha^{-1}$ as $k\to\infty$ occurs precisely with the help of the kink (\ref{VL_kink}) which gives exponentially small corrections in the self-similar sector. We can say that in this case a certain hybrid of the self-similar solution (\ref{uFG}) and the traveling-wave solution arises. The velocities of both solutions are constant, but the first solution propagates due to the scaling transformation, and the second one due to the translation. It follows that asymptotically the abscissa of the inversion point in the relative variable $x=n/(2t)$ is determined as the ratio of these velocities
\begin{equation}\label{x*}
 x_*=\frac{V_{\text{kink}}}{2}=\frac{\alpha-\alpha^{-1}}{2\log\alpha}.
\end{equation}
For $\alpha=1/3$ we have $x_*\approx1.214$, which is in good agreement with the numerical solution already at $t=30$ in Fig.~\ref{fig:gr13}. The same solution is shown on the right in Fig.~\ref{fig:VL_kink} for $t=500$, on the interval $x\in[0.9,1.35]$. The absolute width of the flip region does not depend on $t$, therefore its relative width tends to 0 with increasing $t$. Note that for all $\alpha\in(0,1)$ the inequalities are true
\[
 1\le x_* \le c=\frac{1}{4}(\alpha+\alpha^{-1}+2),
\]
meaning that the inversion point lies inside the parabola.

For $t<0$, the solutions corresponding to $\alpha$ and to $1/\alpha$ differs in a simpler way, just by switching variables with even and odd $n$ between the two branches of the parabola (cf.~bottom Figs.~\ref{fig:gr3} and \ref{fig:gr13}). As $-t$ increases, both variables $u_1(t)$ and $u_2(t)$ decrease without changing their relative positions. If $\alpha\ge1$, then $u_1(t)\ge u_2(t)$, and then $f_0(0)=4c$ and $g_0(0)=0$, that is, $\sigma=1$. If $\alpha<1$, then, conversely, $u_1(t)<u_2(t)$ and then $f_0(0)=0$ and $g_0(0)=4c$, which means $\sigma=-1$. As a result, we arrive at the formulas (\ref{fg0m}).

\section{Calculation of correction terms in the case \texorpdfstring{$\alpha=1$}{alpha=1}}
\label{s:a1} 

Let us return to the series (\ref{FGseries}). As we have seen, calculating subsequent coefficients requires additional information about the solution. In the simplest case $\alpha=1$, this can be obtained using the following property.

\begin{proposition}
The Volterra lattice \textnormal{(\ref{VL})} is consistent with the constraint
\begin{equation}\label{constr}
 (tu_{n-1}+tu_n+n)(tu_n+tu_{n+1}+n+1)u_n=(2tu_n+n)(2tu_n+n+1)
\end{equation}
and the solution corresponding to the boundary condition $u_0(t)=0$ and the initial data $u_n(0)=1$ for $n>0$ satisfies this constraint for all $t$.
\end{proposition}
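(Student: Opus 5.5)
The plan is to treat the two claims separately: first a purely algebraic consistency check, then an argument that the specific solution lies on the constraint, based on Taylor expansion at $t=0$. Write the constraint as $Q_n=0$, where
\[
 Q_n=(tu_{n-1}+tu_n+n)(tu_n+tu_{n+1}+n+1)u_n-(2tu_n+n)(2tu_n+n+1).
\]

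\emph{Consistency.} Solve $Q_n=0$ for $u_{n+1}$ as a rational function of $u_{n-1},u_n,t$. Then differentiate $Q_n$ in $t$ and substitute $\dot u_k=u_k(u_{k+1}-u_{k-1})$ for $k=n-1,n,n+1$. The result is a polynomial in $u_{n-2},\dots,u_{n+2}$ and $t$. The goal is an identity of the form
\begin{equation*}
 \dot Q_n=A_nQ_{n-1}+B_nQ_n+C_nQ_{n+1},
\end{equation*}
with coefficients $A_n,B_n,C_n$ that are rational in $t,u_k$. To find them, note that $u_{n-2}$ enters $\dot Q_n$ only through $\dot u_{n-1}$, and it enters $Q_{n-1}$ linearly; this fixes $A_n$. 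Symmetrically, $u_{n+2}$ fixes $C_n$. One then checks that the remainder is divisible by $Q_n$, which yields $B_n$. This is a finite computer-algebra verification. It says exactly that $Q_n\equiv0$ for all $n$ is an invariant relation of the flow, which is the meaning of ``consistent''.

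\emph{The particular solution.} With $u_0=0$ we get $Q_0=(tu_{-1})(tu_1+1)\cdot0-0\cdot1=0$ identically, so the boundary equation is compatible (the undefined $u_{-1}$ drops out). At $t=0$ the constraint reduces to $n(n+1)(u_n-1)=0$, which holds for the data $u_n(0)=1$.

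Since $t=0$ is degenerate for the constraint (all $t$-terms vanish), I would not try a direct uniqueness argument for the ODE. Instead I would argue with power series in $t$. The solution of (\ref{VL+}) with bounded positive data is real-analytic in $t$ near $0$: it is an analytic ODE in $\ell^\infty$, and in fact each $u_n$ is entire by the known formulas. Hence each $Q_n(t)$ is analytic. Suppose every $Q_n=O(t^k)$. If $A_n,B_n,C_n$ are regular at $t=0$ along the solution, the identity above gives $\dot Q_n=O(t^k)$, hence $Q_n=O(t^{k+1})$ because $Q_n(0)=0$. For $n=1$ the term $A_1Q_0$ vanishes. Induction on $k$ then shows that all Taylor coefficients of every $Q_n$ vanish, so $Q_n\equiv0$ near $t=0$, and by analyticity for all $t$.

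\emph{Main obstacle.} The delicate point is regularity of the multipliers. In the ideal-membership identity, $A_n,B_n,C_n$ may carry denominators such as $u_n$, $tu_{n-1}+tu_n+n$ or $2tu_n+n$. At $t=0$ these equal $1$, $n$ and $n$ on our data, so they are nonzero for $n\ge1$, but this must be checked for the concrete expressions. If a denominator could vanish, I would instead use the order of vanishing $u_n^{(j)}(0)=0$ for $1\le j<n$ to redo the induction in a triangular way, handling $u_n$ order by order in $t$ and in $n$ simultaneously.
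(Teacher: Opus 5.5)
Your proposal is correct and follows the paper's proof. It rests on a direct computation showing that $\dot Q_n$ is an algebraic consequence of the constraints, together with the observation that at $t=0$ the constraint reduces to $n(n+1)(u_n-1)=0$. Your Taylor-coefficient induction just makes rigorous the propagation step that the paper asserts in one line.

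The obstacle you flag does not arise. The computation gives
\[
P_nP_{n+1}\dot Q_n=u_n\bigl(P_n^2Q_{n+1}-P_{n+1}^2Q_{n-1}+(P_{n+1}^2-P_n^2)Q_n\bigr),\qquad P_n=t(u_{n-1}+u_n)+n,
\]
and $P_n(0)=n$, $P_{n+1}(0)=n+1$ are nonzero for $n\ge1$. One small correction: $u_n$ is not entire. For example, $u_1$ is a logarithmic derivative of ${}_1F_1$ and has complex poles. Your induction only needs the real-analyticity that your $\ell^\infty$ argument already provides, so this does not affect the proof.
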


To prove it, it suffices to differentiate (\ref{constr}) by virtue of (\ref{VL}) and verify by direct calculation that the resulting relation is an algebraic consequence of (\ref{constr}). It follows that if some solution of VL satisfies the constraint at some time $t$, then this is true for all $t$ for which this solution is defined. At time $t=0$, the constraint equation simplifies to
\[
 n(n+1)(u_n(0)-1)=0,
\]
which is satisfied by the initial data under study, therefore the solution with such initial data satisfies (\ref{constr}) for any $t$; the boundary condition $u_0(t)=0$ is also preserved automatically due to the dynamics.

\begin{remark}
The constraint (\ref{constr}) admits a generalization with arbitrary parameters $c_0,c_1,c_2$ and $c_3$ obtained in \cite{Adler_Shabat_2019} as the stationary equation for certain non-autonomous symmetry of VL:
\begin{equation}\label{cc}
\begin{aligned}
 &(tu_{n-1}+tu_n+n+c_0)(tu_n+tu_{n+1}+n+c_0+1)u_n\\
 &\qquad\qquad = c_1(4tu_n+2n+2c_0+1)^2 +(-1)^nc_2(4tu_n+2n+2c_0+1) +c_3.
\end{aligned}
\end{equation}
This makes possible to handle slightly more general initial data (not including, however, (\ref{VL0}) for $\alpha\ne1$). A related, simpler constraint was studied in \cite{Its_Kitaev_Fokas_1990, Fokas_Its_Kitaev_1991}. Equation (\ref{cc}) coincides, up to notation, with the discrete Painlev\'e equation dP-XXXIV, and reduces the Volterra lattice to the Painlev\'e equation P-V, up to a simple substitution. Moreover, if the parameters are consistent with the zero boundary condition $u_0(t)=0$ (like, in particular, for the relation (\ref{constr})), then the equation for $u_1(t)$ becomes a Riccati equation and its solution is expressed by the logarithmic derivative of the confluent hypergeometric function. Using this, it is possible to obtain asymptotic formulas for fixed $n$ \cite{Adler_Shabat_2018, Adler_Shabat_2019}. However, to obtain self-similar asymptotics, it is more convenient to work directly with the constraint equation.
\end{remark}

\begin{proposition}\label{th:alpha1-series}
For $\alpha=1$, the first few coefficients of the asymptotic series \textnormal{(\ref{FGseries})} for $t\to+\infty$ and for odd $n$ are:
\begin{gather*}
 f_0(x)=2-x+2\sqrt{1-x},\qquad 
 f_1(x)=-\frac{1}{4}\left(1+\frac{1}{\sqrt{1-x}}\right),\\
 f_2(x)=-\frac{1}{32(1-x)^2}\left(1+\frac{1}{\sqrt{1-x}}\right),\qquad
 f_3(x)=-\frac{1}{64(1-x)^3}\left(1+\frac{4+x}{4\sqrt{1-x}}\right),\\
 f_4(x)=-\frac{1}{2^{11}(1-x)^5}\left(25+11x+\frac{50+24x-x^2}{2\sqrt{1-x}}\right),\\
 f_5(x)=-\frac{1}{2^{11}(1-x)^6}\left(26+19x+\frac{416+366x+21x^2}{16\sqrt{1-x}}\right),\\
 f_6(x)=-\frac{1}{2^{16}(1-x)^8}\left(1073+1400x+173x^2
   +\frac{4292+5704x+681x^2-20x^3}{4\sqrt{1-x}}\right),\dotsc,
\end{gather*} 
and the coefficients $g_j(x)$ corresponding to even $n$ are obtained from $f_j(x)$ by changing the sign of the root $\sqrt{1-x}$; 

for $t\to-\infty$ we have
$g_j(x)=f_j(x)$ and
\begin{gather*}
 f_0(x)=x,\qquad 
 f_1(x)=-\frac{1}{4}\left(1-\frac{1}{\sqrt{1-x}}\right),\\
 f_2(x)=\frac{1}{32(1-x)^2}\left(1-\sqrt{1-x}\right),\qquad
 f_3(x)=-\frac{1}{64(1-x)^3}\left(1-\frac{4+x}{4\sqrt{1-x}}\right),\\
 f_4(x)= \frac{1}{2048(1-x)^5}\left(25+11x-5(5+2x)\sqrt{1-x}\right),\\
 f_5(x)= -\frac{1}{2^{11}(1-x)^6}\left(26+19x-\frac{416+366x+21x^2}{16\sqrt{1-x}}\right),\\ f_6(x)= \frac{1}{2^{18}(1-x)^8}\left(4(1073+1400x+173x^2)-(4292+5496x+651x^2)\sqrt{1-x}\right),\dotso.
\end{gather*} 
\end{proposition}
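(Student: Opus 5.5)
Substitute the ansatz (\ref{uFG}) with the series (\ref{FGseries}) into the constraint (\ref{constr}) rather than into the lattice itself; the constraint then yields algebraic, not differential, recurrences. Put $n=2vtx$ with $v=\pm1$ and divide (\ref{constr}) by $t^3$. For odd $n$ we have $u_n=F(x,t)$ and $u_{n\pm1}=G(x\pm\frac{1}{2vt},t)$. For even $n$ the roles of $F$ and $G$ are exchanged. Expand the shifted arguments by Taylor's formula in powers of $1/(2vt)$. This gives two functional relations
\[
 \Phi(F,G)=0,\qquad \Phi(G,F)=0,
\]
each a power series in $t^{-1}$ whose coefficients are polynomial in $x$, in the values $f_i,g_i$ and in their derivatives.

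At order $t^0$ the constraint becomes
\[
 (g_0+f_0+2vx)^2f_0=(2f_0+2vx)^2,
\]
together with its analogue obtained by swapping $f_0$ and $g_0$. By Proposition~\ref{th:fg} with $c=1$, we have $f_0+g_0+2vx=4c=4$ when $t>0$, and this reduces the relation to $4f_0=(f_0+vx)^2$. Its solution is $f_0=2-x\pm2\sqrt{1-x}$, consistent with (\ref{fgc}). For $t<0$ we have $c=0$, so the relation reads $(f_0+g_0-2x)^2f_0=\dots$ and the solution degenerates to $f_0=g_0=x$. This reproduces Section~\ref{s:leading}, and the choice of sign of the root is fixed there as well.

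At order $t^{-j}$ the odd-$n$ relation gives
\[
 A_jf_j+B_jg_j=R_j,
\]
where $A_j,B_j$ are the partial derivatives of the leading-order relation at $(f_0,g_0)$ and $R_j$ depends only on $f_i,g_i$, $i<j$, and their derivatives. The even-$n$ relation gives the swapped equation $A_j'g_j+B_j'f_j=R_j'$. Provided the $2\times2$ determinant is nonzero, which I expect to be a power of $(1-x)$ times a nonvanishing factor, each $f_j,g_j$ follows by pure algebra. An induction then shows that every $f_j,g_j$ lies in $\mathbb{Q}(x)[\sqrt{1-x}]$ with denominators that are powers of $(1-x)$.

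Since $g_0$ is obtained from $f_0$ by $\sqrt{1-x}\to-\sqrt{1-x}$, and the system is symmetric under swapping $F\leftrightarrow G$ together with this conjugation, the same holds for all $g_j$. For $t<0$ we have $f_0=g_0$ and the system is symmetric under $F\leftrightarrow G$. Hence $f_j=g_j$ there, and the square root now enters through the $t^{-1}$ coefficient. The listed $f_1,\dots,f_6$ come from running the recursion, with routine symbolic simplification.

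\textbf{Main obstacle.} For $t<0$ the leading linearization is degenerate: at $f_0=g_0=x$, $c=0$, the coefficient $f_0+g_0-2x$ vanishes. So at order $t^{-1}$ the equation for $f_1,g_1$ is quadratic rather than linear, and this is where $\sqrt{1-x}$ and its sign are born. At higher orders the solvable combination $f_j+g_j$ may appear only at order $j+1$. I would handle this by writing $f_j=g_j$ and solving the relation at order $j+1$ for $f_j$, which is linear in $f_j$ with coefficient proportional to $f_1$ and hence nonzero. As a check I would also verify that the results satisfy (\ref{fg1sys}) and (\ref{fg2sys}), which follow from the lattice equation itself.

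The sign of $\sqrt{1-x}$ in $f_1$ for $t<0$ must be chosen so that the known fixed-$n$ asymptotics of \cite{Adler_Shabat_2018, Adler_2024} match as $x\to0$: at $x=0$ the correction must stay bounded and consistent with $u_1\to0$. Comparison with that expansion, or with the numerics, selects the sign shown in the statement.
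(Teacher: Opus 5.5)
Your route is the paper's own. You substitute the two-parity ansatz into the constraint (\ref{constr}), Taylor-expand the shifted arguments, and solve the resulting purely algebraic recursion. You obtain $g_j$ for $t>0$ from the symmetry $F\leftrightarrow G$ combined with $\sqrt{1-x}\to-\sqrt{1-x}$, and you treat the degenerate first step separately for $t<0$. For $t>0$ it works as you describe. The order-$j$ system is $(2-x)f_j+f_0g_j=\dots$, $g_0f_j+(2-x)g_j=\dots$, with determinant $(2-x)^2-f_0g_0=4(1-x)$, and the first step gives $f_1=(x-f_0)/(8(1-x))=-\frac14\bigl(1+(1-x)^{-1/2}\bigr)$.

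In the $t<0$ part, three details are wrong as stated, though each is easily repaired.

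(i) Both sides of (\ref{constr}) are $O(t^2)$, so you divide by $t^2$, not $t^3$. With $f_0=g_0=x$ the two leading orders vanish identically. Hence the quadratic system $x(f_1+g_1+\frac12)^2=2f_1(2f_1+1)$, $x(f_1+g_1+\frac12)^2=2g_1(2g_1+1)$ arises two orders below the leading one, not one. Your later count, ``order $j+1$'', is the right one.

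(ii) For $j\ge2$ the system is $(1+4f_1)\bigl[(x-2)f_j+xg_j\bigr]=\dots$, $(1+4f_1)\bigl[xf_j+(x-2)g_j\bigr]=\dots$. With $f_j=g_j$ the coefficient is $2(x-1)(1+4f_1)=-2\sqrt{1-x}$. It is proportional to $1+4f_1=(1-x)^{-1/2}$, not to $f_1$. Your inference ``proportional to $f_1$, hence nonzero'' would actually fail at $x=0$, where $f_1=0$. The nonzero determinant $4(1-x)(1+4f_1)^2$ is also what turns the $F\leftrightarrow G$ symmetry into $f_j=g_j$, via uniqueness. At $j=1$ symmetry alone is not enough: the quadratic system also has the non-symmetric roots $(f_1,g_1)=(0,-\frac12)$ and $(-\frac12,0)$.

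(iii) Your sign criterion does not discriminate. Both symmetric roots $f_1=-\frac14\bigl(1\mp(1-x)^{-1/2}\bigr)$ are bounded at $x=0$, with values $0$ and $-\frac12$, and both give $u_1\to0$. What selects the sign is $f_1(0)=0$. Take the case $n=1$ of (\ref{constr}) with $u_0=0$ and $u_2=\dot u_1/u_1$; it yields the Riccati equation $t\dot u_1=4tu_1+2-2u_1-tu_1^2$. Hence $u_1=-\frac{1}{2t}+O(t^{-2})$, which matches $x+f_1(x)/t$ at $x=-\frac{1}{2t}$ only if $f_1(0)=0$. The paper instead requires $f_1,g_1>0$, since the $u_n$ approach $x$ from below as $t\to-\infty$. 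That condition selects the same root and also rules out the non-symmetric ones.
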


\begin{proof}
We replace the variables $u_n$ in (\ref{constr}) according to the formulas (\ref{uFG}). This gives a pair of equations: for even $n$ it reads
\begin{gather*}
 \left(tF\Bigl(x-\frac{1}{2vt},t\Bigr)+tG(x,t)+2vxt\right)
 \left(tG(x,t)+tF\Bigl(x+\frac{1}{2vt},t\Bigr)+2vxt+1\right)G(x,t)\\
 -(2tG(x,t)+2vxt)(2tG(x,t)+2vxt+1)=0
\end{gather*}
and equation for odd numbers is obtained by interchanging $F$ and $G$. Next, we substitute the expansions (\ref{FGseries}) and collect terms with the same powers of $t$. The leading coefficients of the expansions are already known. For $t\to+\infty$, we have
\[
 v=1,\qquad f_0=2-x+2\sqrt{1-x},\qquad g_0=2-x+2\sqrt{1-x}, 
\]
according to the formula (\ref{fg0p}) for $\alpha=1$. The terms with $t^2$ are canceled out identically, and in the subsequent orders systems of linear algebraic equations with respect to $f_j$ and $g_j$ arise, of the form
\[
 (2-x)f_j+(2-x+2\sqrt{1-x})g_j=\dotsc,\qquad
 (2-x-2\sqrt{1-x})f_j+(2-x)g_j=\dotsc
\]   
where right-hand sides are expressions containing $f_i$ and $g_i$ (and their derivatives) for $i<j$. From here, the coefficients can be found recursively.

For $t\to-\infty$ we have
\[
 v=-1,\qquad f_0=g_0=x,
\]
according to the formula (\ref{fg0m}) for $\alpha=1$. In this case, both the terms with $t^2$ and the terms with $t^1$ cancel identically. The terms with $t^0$ give a system of algebraic equations for $f_1$ and $g_1$, the solution of which is written above, with the choice of the square root sign determined by the fact that $f_1$ and $g_1$ must be positive (since $F=x+f_1(x)/t+\dots$ approaches $x$ from below and $t<0$). In subsequent orders, we obtain equations of the form
\[
 (x-2)f_j+xg_j=\dotsc,\qquad xf_j+(x-2)g_j=\dotsc,
\]   
with $f_i$ and $g_i$ for $i<j$ in the right-hand side, from which, again, coefficients are found recursively. By construction, the equations are invariant under the permutation of $f$ and $g$, and since in this case $f_0=g_0$ and $f_1=g_1$, then the equalities $f_j=g_j$ are satisfied for all $j$. This can be taken into account in the formulas to simplify the calculations.
\end{proof}

\begin{figure}[t]
\centerline{\includegraphics[width=0.95\textwidth]{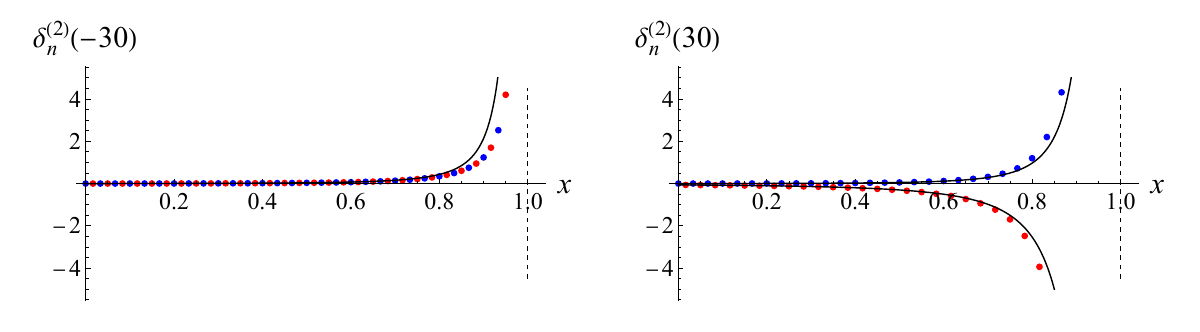}} 
\caption{Second correction for $\alpha=1$, at $t=\pm30$.}
\label{fig:delta2} 
\end{figure}

The graphs of the coefficients $f_2(x)$, $g_2(x)$ are shown in Fig.~\ref{fig:delta2}, along with the points
\[
\delta^{(2)}_n(t)=\left\{\begin{array}{ll}
  t^2(u_n(t)-f_0(x))-tf_1(x) & \text{for odd}~n,\\
  t^2(u_n(t)-g_0(x))-tg_1(x) & \text{for even}~n
 \end{array}\right.
\]
computed for the numeric solution $u_n(t)$.

\section{Asymptotics near the point \texorpdfstring{$x=1$}{x=1} for \texorpdfstring{$\alpha=1$}{alpha=1}}\label{s:x1a1}  

Proposition \ref{th:alpha1-series} implies that the asymptotic expansions (\ref{FGseries}) lose their applicability as $x$ approaches $x=1$, since the coefficients $f_j(x)$ have singularities at this point, and their order increases with increasing index $j$. According to the method of matching asymptotic expansions \cite{Il'in}, to correctly describe the solution in such a situation, one should pass from $x$ to a stretched variable
\begin{equation}\label{zr}
 z=(x-1)|t|^{2r},\qquad r>0.
\end{equation}
The value of $r$ is chosen based on the following considerations. Let us perform such a transformation in the formulas from Proposition \ref{th:alpha1-series} and retain the leading terms in $t$. First, we consider the series as $t\to+\infty$. For them, we obtain, for finite $z<0$:
\begin{alignat*}{2}
 f_0 &= 1+2t^{-r}(-z)^{1/2}(1+O(t^{-r})),&
 \frac{f_1}{t} &= -t^{r-1}\frac{(1+O(t^{-r}))}{4(-z)^{1/2}},\\
 \frac{f_2}{t^2} &= -t^{-r+2(3r-1)}\frac{(1+O(t^{-r}))}{32(-z)^{5/2}},&
 \frac{f_3}{t^3} &= -t^{r-1+2(3r-1)}\frac{5(1+O(t^{-r}))}{2^8(-z)^{7/2}},\\
 \frac{f_4}{t^4} &= -t^{-r+4(3r-1)}\frac{73(1+O(t^{-r}))}{2^{12}(-z)^{11/2}},&
 \frac{f_5}{t^5} &= -t^{r-1+4(3r-1)}\frac{803(1+O(t^{-r}))}{2^{15}(-z)^{13/2}},\\
 \frac{f_6}{t^6} &= -t^{-r+6(3r-1)}\frac{10657(1+O(t^{-r}))}{2^{18}(-z)^{17/2}},\qquad &&\dotso.
\end{alignat*} 
Hence, representing the first of the expansions (\ref{FGseries}) in the form
\[
 F(x,t)=\sum_{j=0}^{\infty}\frac{f_{2j}}{t^{2j}}+\sum_{j=0}^{\infty}\frac{f_{2j+1}}{t^{2j+1}},
\]
we obtain, after passing to the variable (\ref{zr}),
\begin{gather*}
 F=1+t^{-r}\left[2\sqrt{-z}-\frac{t^{2(3r-1)}}{32(-z)^{5/2}}-\frac{73t^{4(3r-1)}}{2^{12}(-z)^{11/2}}
   -\frac{10657t^{6(3r-1)}}{2^{18}(-z)^{17/2}}+\dotsb\right](1+O(t^{-r}))\\
   -t^{r-1}\left[\frac{1}{4\sqrt{-z}}+\frac{5t^{2(3r-1)}}{128(-z)^{7/2}}
   +\frac{803t^{4(3r-1)}}{2^{15}(-z)^{13/2}}+\dotsb\right](1+O(t^{-r})).
\end{gather*}
The constant $r$ must be chosen in such a way (see, for example, the arguments given in \S IV.3.2 of the monograph \cite{Il'in}) that in this formula all the terms written out in both square brackets have the order $O(1)$ for $t\to\infty$ and finite negative values of $z$. Thus, we should set $r=1/3$ and apply the substitution
\begin{equation}\label{z}
 z=(x-1)t^{2/3}.
\end{equation}
Then all the terms of the series $F=f_0+f_1/t+f_2/t^2+\dotsb$ with even numbers will be of the order of $t^{-r}=t^{-1/3}$, and terms with odd numbers will be  of the order of $t^{r-1}=t^{-2/3}$. After formal rearranging the terms, we come to a series of the form
\begin{equation}\label{aspf} 
 F=1+t^{-1/3}p_1(z)+t^{-2/3}p_2(z)+t^{-1}p_3(z)+\dotsb
\end{equation} 
with the coefficients which, in turn, are represented by formal series in inverse powers of $\sqrt{-z}$. Similarly, the formal asymptotic series for $G$ is transformed into a series
\begin{equation}\label{aspg} 
 G=1+t^{-1/3}q_1(z)+t^{-2/3}q_2(z)+t^{-1}q_3(z)+\dotsb,
\end{equation} 
where, according to Proposition \ref{th:alpha1-series}, $q_j$ differ from $p_j$ by replacing $\sqrt{-z}$ with $-\sqrt{-z}$. The first coefficients of the series $F$ and $G$ are determined from the leading terms of the expansions for $f_{2j}/t^{2j}$ and are represented by the series
\begin{equation}\label{pminf}
 p_1(z)=\sqrt{-z}\left(2-\frac{1}{32(-z)^3}-\frac{73}{2^{12}(-z)^6}-\frac{10657}{2^{18}(-z)^9}+\dotsb\right),\qquad q_1(z)=-p_1(z).
\end{equation}

Thus, according to the matching method, asymptotic expansions suitable for all finite values of $z$ as $t\to+\infty$ should be sought in the form of series (\ref{aspf}) and (\ref{aspg}) satisfying the following requirements:

1) as $z\to-\infty$, the coefficients $p_1(z)$ and $q_1(z)$ should be expanded into power asymptotic series (\ref{pminf}), which ensures consistency with the expansions (\ref{FGseries}) in the region $x<1$;

2) as $z\to\infty$, all coefficients $p_j(z)$ and $q_j(z)$ must be rapidly decreasing, that is, they must tend to zero along with all derivatives, which ensures consistency of the expansions with the solution of VL tending to the steady state $u_n(t)=1$ in the region $x>1$.

When passing from $x$ to the stretched variable (\ref{z}), the system of equations (\ref{FGsys0}) takes the form
\begin{equation}\label{FGz-sys}
\begin{aligned}
 tF_t-t^{2/3}F_z-\frac{z}{3}F_z 
  &=\frac{t^{2/3}}{v}F\left(G_z+\frac{G_{zzz}}{24v^2t^{2/3}}+\dotsb\right),\\
 tG_t-t^{2/3}G_z-\frac{z}{3}G_z 
  &=\frac{t^{2/3}}{v}G\left(F_z+\frac{F_{zzz}}{24v^2t^{2/3}}+\dotsb\right). 
\end{aligned}
\end{equation}
Here we set $v=1$, substitute the series (\ref{aspf}) and (\ref{aspg}), and collect the terms with powers of $t^{1/3}$, $t^0$ and $t^{-1/3}$, which gives the following set of equations for the coefficients:
\begin{gather}
\label{pq-sys1}
 -p'_1=q'_1,\qquad -q'_1=p'_1, \\
\label{pq-sys2}
 -p'_2=q'_2+p_1q'_1,\qquad 
 -q'_2=p'_2+q_1p'_1,\\
\label{pq-sys3}
 \begin{gathered}
 -p'_3-\frac{(zp_1)'}{3}= q'_3+\frac{q'''_1}{24}+p_1q'_2+p_2q'_1,\qquad
 -q'_3-\frac{(zq_1)'}{3}= p'_3+\frac{p'''_1}{24}+q_1p'_2+q_2p'_1.
 \end{gathered}
\end{gather}
It is easy to see that equations (\ref{pq-sys1}) and (\ref{pq-sys2}) can be integrated and, taking into account the condition of rapid decrease of their solutions as $z\to\infty$, they are reduced to
\begin{equation}\label{q12}
 q_1=-p_1,\qquad q_2=-p_2+\frac{p_1^2}{2}.
\end{equation}
We substitute $q_1$ and $q_2$ according to this and replace equations (\ref{pq-sys3}) with their sum and difference, which gives
\[
 p'_3+q'_3= p_1p'_2+p'_1p_2-\frac{3}{4}p^2_1p'_1,\qquad p'''_1-8(zp_1)'-6p^2_1p_1'=0.
\]
After integration, taking into account the condition of rapid decrease of $p_j$ and $q_j$, we obtain
\begin{equation}\label{q3}
 q_3=-p_3+p_1p_2-\frac{p_1^3}{4},\qquad p''_1=2p^3_1+8zp_1.
\end{equation}
A closed ODE has arisen for the function $p_1$, and given the boundary conditions it satisfies, we conclude that $p_1$ is the special Hastings--McLeod solution of the second Painlev\'e equation (P-II) with zero parameter
\begin{equation}\label{P-II}
 y''=2y^3+8zy,
\end{equation}
satisfying the conditions
\begin{equation}\label{HM-asymp}
 y(z)=2(-z)^{1/2}(1+o(1)),~~ z\to-\infty,\qquad y(z)\to0,~~ z\to\infty.
\end{equation}
This solution exists, it is unique, monotonous and positive \cite{Hastings_McLeod_1980}. 

\begin{remark} 
In \cite{Hastings_McLeod_1980}, the form of the full asymptotic expansion (\ref{pminf}) as $z\to-\infty$ was not established. For such $z$,
only the validity of the relation (\ref{HM-asymp}) was demonstrated there. The fact that the asymptotic expansion of $p_1(z)=y(z)$ as $z\to -\infty$ is determined precisely by the power expansion (\ref{pminf}) is verified as follows.

1) Let us assume that there are two different solutions $y_1(z)$ and $y_2(z)$ of P-II, which have asymptotics (\ref{HM-asymp}) as $z\to-\infty$. Their difference $W(z)=y_1(z)-y_2(z)$ satisfies the relation
\[
 W''(z)=S(z)W(z)=2\bigl(4z+y_1(z)^2+y_1(z)y_2(z)+y_2(z)^2\bigr)W(z).
\]
As $z\to-\infty$, we have
\[
S(z)=-16z(1+o(1))>\beta^2>0,
\]
for any positive constant $\beta$. Therefore, for such $z$, the function $W(z)$ cannot have either positive maxima or negative minima and, therefore, retains a constant sign for all sufficiently large $-z$. Therefore, for $z\to-\infty$ and an arbitrary $\beta>0$, either $W''>\beta^2W$ (if this difference is positive) or $-W''>\beta^2(-W)$. At the same time, by virtue of the validity of the asymptotics (\ref{HM-asymp}) as $z\to-\infty$, we have $|W(z)|=o(\exp{(-\beta z)})$. Therefore, it follows from \cite[Lemma 1.1]{Il'in_Suleimanov_2004} that for any two solutions $y_1(z)$ and $y_2(z)$ of P-II equation (\ref{P-II}), which have asymptotics (\ref{HM-asymp}) as $z\to-\infty$, the estimate holds
\begin{equation}\label{ambis}
 |y_1(z)-y_2(z)|=O(\exp{(\beta z)}) \qquad (\beta>0).
\end{equation}
 
2) The change of variables
\[
 \tau=(-z)^{-3},\qquad y(z)=(-z)^{1/2}\lambda(\tau)
\]
transforms equation (\ref{P-II}) into the ODE
\begin{equation}\label{Kodu}
 \tau^3\lambda''=-\tau^2\lambda'+\lambda\left(\frac{2\lambda^2-8}{9}+\frac{\tau}{36}\right),
\end{equation}
and the formal power solution (\ref{pminf}) of (\ref{P-II}) into the formal power solution of (\ref{Kodu})
\begin{equation}\label{forT}
\lambda(\tau)=2-\frac{\tau}{32}-\frac{73\tau^2}{2^{12}}-\frac{10657\tau^3}{2^{18}}+\dotsb. 
\end{equation}
According to the general theorem of Kuznetsov \cite{Kuznetsov_1972}, in a sufficiently small neighborhood of the point $\tau=0$ there also exists an infinitely differentiable solution of the ODE (\ref{Kodu}) for which the series (\ref{forT}) serves as the Taylor expansion. This means that for sufficiently large values of $-z$ there exists a solution $y_1(z)$ of P-II equation (\ref{P-II}) for which (\ref{pminf}) is the asymptotic expansion as $z\to-\infty$. Denoting by $y_2(z)$ the special Hastings--McLeod solution, we obtain, thanks to the estimate (\ref{ambis}) for large negative $z$, that this special solution also has the same power-law asymptotic expansion (\ref{pminf}).
\end{remark}

To find the next coefficients $p_2$ and $q_2$, we collect in (\ref{FGz-sys}) the terms with $t^{-2/3}$ and replace in them $q_1,q_2,q_3$ and $p''_1$ according to the relations (\ref{q12}), (\ref{q3}). This gives an expression for $q'_4$ in terms of $p_1,p_2,p_3$ and $p_4$, and an equation for $p_2$:
\[
 p'''_2 = 2(3p^2_1+4z)p'_2+4(3p_1p'_1+4)p_2+12zp_1p'_1.
\]
We will also use the constraint equations (\ref{constr}). Replacing the odd-numbered $u_n$ with the $F$ series and the even-numbered $u_n$ with the $G$ series, and collecting the coefficients at the powers of $t^{5/3}$, $t^{4/3}$, and $t$, we obtain exactly the relations (\ref{q12}) and (\ref{q3}). For the power $t^{2/3}$, an expression for $q_4$ arises, which we will not write out, and another equation for $p_2$:
\[
 p''_2= 2(3p^2_1+4z)p_2 +\frac{1}{2}((p'_1)^2-p^4_1+4zp^2_1+4p_1).
\]
Together with the previous one, it forms an overdetermined system with respect to $p_2$, and it is easy to verify that their only common solution is the function $p_2=(p^2_1+p'_1)/4$, where $p_1$ satisfies (\ref{q3}). After this, we find $q_2$ from (\ref{q12}) and, finally, arrive at the following statement.

\begin{proposition}\label{th:in-ser+}
FAS \textnormal{(\ref{aspf})} and \textnormal{(\ref{aspg})} as $t\to+\infty$ have the form
\begin{equation}\label{in-ser+}
 F=1+\frac{y(z)}{t^{1/3}}+\frac{y(z)^2+y'(z)}{4t^{2/3}}+O(t^{-1}),\qquad
 G=1-\frac{y(z)}{t^{1/3}}+\frac{y(z)^2-y'(z)}{4t^{2/3}}+O(t^{-1})
\end{equation} 
where $y(z)$ is the Hastings--McLeod solution.
\end{proposition}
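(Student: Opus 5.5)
The proof assembles the computations just described into a formal argument. I would substitute the inner series (\ref{aspf}) and (\ref{aspg}) into the stretched system (\ref{FGz-sys}) with $v=1$ and collect the powers $t^{1/3},t^0,t^{-1/3},t^{-2/3}$ in turn. The orders $t^{1/3}$ and $t^0$ give (\ref{pq-sys1}) and (\ref{pq-sys2}). Integrating them and fixing the constants by requirement 2) (rapid decay as $z\to+\infty$) gives (\ref{q12}). At order $t^{-1/3}$ I would use the sum and difference of (\ref{pq-sys3}). After substituting (\ref{q12}) and integrating once, again with zero constants by decay, this yields (\ref{q3}): the relation for $q_3$ and the closed equation $p_1''=2p_1^3+8zp_1$.

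Next I would identify $p_1$. Requirement 1) says $p_1\sim 2\sqrt{-z}$ as $z\to-\infty$, matching (\ref{pminf}), and requirement 2) says $p_1\to0$ as $z\to+\infty$. By the uniqueness of the Hastings--McLeod solution (\ref{HM-asymp}) \cite{Hastings_McLeod_1980}, $p_1=y$. The Remark following (\ref{HM-asymp}) then shows that the full power expansion (\ref{pminf}) holds, so the matching with the outer expansion is consistent to all orders in $(-z)^{-1/2}$, not only at leading order. This gives the $t^{-1/3}$ terms of (\ref{in-ser+}): $y$ for $F$ and, via $q_1=-p_1$, $-y$ for $G$.

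For the $t^{-2/3}$ coefficients, the dynamics alone give only the third-order linear equation for $p_2$ obtained at order $t^{-2/3}$. I would therefore also substitute the series into the constraint (\ref{constr}). The powers $t^{5/3},t^{4/3},t$ reproduce (\ref{q12}) and (\ref{q3}), and the power $t^{2/3}$ gives the second-order equation $p_2''=2(3p_1^2+4z)p_2+\tfrac12\bigl((p_1')^2-p_1^4+4zp_1^2+4p_1\bigr)$.

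\textbf{Key step.} I would check directly that $p_2=(y^2+y')/4$ satisfies both equations, using $y''=2y^3+8zy$ and $y'''=(6y^2+8z)y'+8y$. Then I would prove uniqueness. Differentiating the second-order equation and subtracting the third-order one leaves, after simplification with the P-II equation, an algebraic (zeroth-order) relation in $p_2$. For the difference $w$ of two solutions, I expect this relation to force $w\equiv0$. If it degenerates, the fallback is different: the homogeneous equation $w''=2(3y^2+4z)w$ is the linearization of P-II at $y$. Its solutions either grow or fail to decay at one of the ends, so they are excluded by the matching condition as $z\to-\infty$ (the coefficient $p_2$ must follow the power series obtained from the $f_{2j+1}$) together with the decay condition as $z\to+\infty$.

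Finally, $q_2=-p_2+p_1^2/2=(y^2-y')/4$ by (\ref{q12}). The main obstacle is this compatibility and uniqueness step for $p_2$, which needs the constraint and not only the dynamics. A secondary check is that the $z\to-\infty$ expansion of $(y^2+y')/4$, computed from (\ref{pminf}), reproduces the leading terms of $f_1/t$, namely $-t^{-2/3}/(4\sqrt{-z})+\dotsb$; this confirms the matching at order $t^{-2/3}$.
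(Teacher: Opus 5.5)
Your proposal is correct and follows the paper's route. The dynamics up to order $t^{-1/3}$ give (\ref{q12}) and (\ref{q3}) and identify $p_1$ with the Hastings--McLeod solution, and the order-$t^{-2/3}$ equation from the dynamics together with the order-$t^{2/3}$ equation from the constraint (\ref{constr}) form the overdetermined system for $p_2$. Your elimination step closes at once: differentiating the second-order equation, using $p_1''=2p_1^3+8zp_1$, and subtracting the third-order one gives exactly $8p_2=2(p_1^2+p_1')$, so $p_2=(y^2+y')/4$ is forced and your fallback via the linearized P-II is not needed.
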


Fig.~\ref{fig:HM} demonstrates the numerical solution $u_n(t)$ near the point $n=2t$, after passing to the stretched variables
\begin{equation}\label{nzxt}
 z=(x-1)t^{2/3},\qquad x=n/(2t),\qquad v_n(t)=t^{1/3}(u_n(t)-1).  
\end{equation}
Dots on the left and right graphs coincide. For $z<0$, they approximately lie on the branches of the parabola determined by the leading terms $f_0(x)$ and $g_0(x)$ of the original outer expansion (\ref{FGseries}) after the change (\ref{nzxt}): $v=2\sqrt{-z}-zt^{-1/3}$ for odd $n$ and $v=-2\sqrt{-z}-zt^{-1/3}$ for even $n$. The black curves correspond to the inner expansion (\ref{in-ser+}): the left graph shows approximations with leading terms $v=\pm y(z)$ only, the right graph shows approximations taking into account the next corrections, that is, $v=\pm y(z)+(y(z)^2\pm y'(z))/(4t^{1/3})$. As we see, the outer and inner expansions agree well for $z<0$.

\begin{remark}
Note that the transition of the solution from the parabola to the stationary state occurs in an interval whose width in the scale of the variable $z$ does not depend on $t$. Since the Hastings--McLeod solution is already close enough to 0 at $z=2$, one can take the characteristic width of the transition sector to be 2. Then, on the scale of the variable $x$, this width is equal to $2t^{-2/3}$ and tends to 0 as $t\to\infty$, while in the scale of the original variable $n$, the width grows as $4t^{1/3}$. Of course, the numerical factors here are rather arbitrary. Theoretically, according to the matching method, the width of this sector is estimated as $O(t^{-2/3+\xi})$, where $\xi$ is any number from the interval $(0,1/3)$.
\end{remark}

\begin{figure}[t]
\centerline{\includegraphics[width=0.95\textwidth]{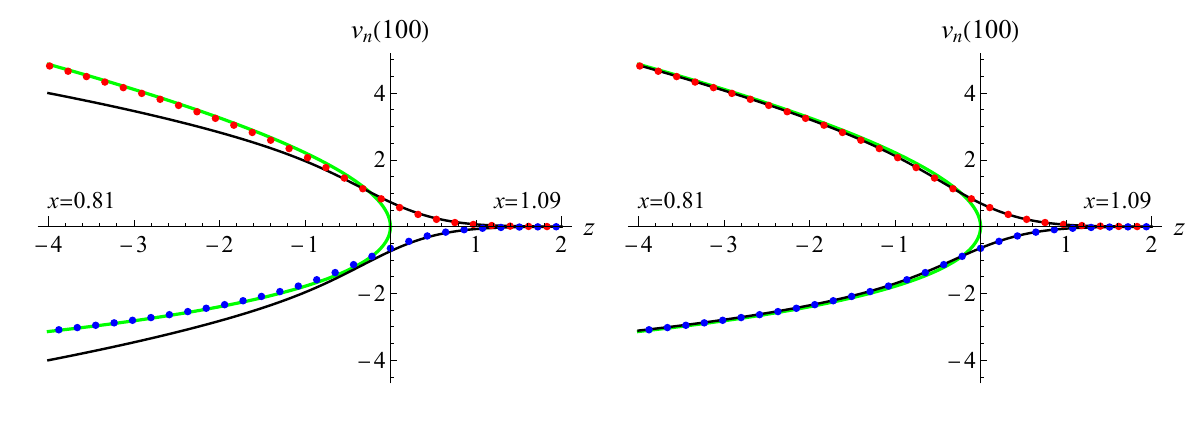}}
\caption{Solution near $n=2t$ in stretched variables $z$ and $v$, for $t=100$. Left: black curves are the graphs of the leading terms of the inner expansions. Right: black curves are the graphs of the leading terms plus first corrections.}
\label{fig:HM} 
\end{figure}

The asymptotic behavior of this solution of VL for $t\to-\infty$ is studied in a rather similar way. From Proposition \ref{th:alpha1-series} it follows that after stretching (\ref{zr}) for finite $z<0$ and $t\to-\infty$ the series for $F$ and $G$ coincide and their terms have the asymptotic behavior
\begin{alignat*}{2}
 f_0 &= 1+(-t)^{-2r}z,&
 \frac{f_1}{t} &= -(-t)^{-2r+3r-1}\frac{(1+O((-t)^{-r}))}{4\sqrt{-z}},\\
 \frac{f_2}{t^2} &= (-t)^{-2r+2(3r-1)}\frac{(1+O(t^{-r}))}{32(-z)^{2}},&
 \frac{f_3}{t^3} &= -(-t)^{-2r+3(3r-1)}\frac{5(1+O(t^{-r}))}{2^8(-z)^{7/2}},\\
 \frac{f_4}{t^4} &= (-t)^{-2r+4(3r-1)}\frac{9(1+O(t^{-r}))}{512(-z)^{5}},&
 \frac{f_5}{t^5} &= -(-t)^{-2r+5(3r-1)}\frac{803(1+O(t^{-r}))}{2^{15}(-z)^{13/2}},\\
 \frac{f_6}{t^6} &= (-t)^{-2r+6(3r-1)}\frac{1323(1+O(t^{-r}))}{2^{15}(-z)^{8}},\qquad &&\dotso.
\end{alignat*} 
The requirement that the orders with respect to $t$ should coincide leads, again, to the value $r=1/3$. In this case, unlike the expansions for $t\to+\infty$, the terms with even and odd numbers have the same order. Thus, the stretched variable for $t\to-\infty$ must also be determined by the formula (\ref{z}). With this choice of $r$ and $z$, the expansion for $F$ is transformed into the form
\begin{equation}\label{asmf}
 F=1+t^{-2/3}p_2(z)+t^{-1}p_3(z)+t^{-4/3}p_4(z)+\dotsb
\end{equation}
where $p_2(z)$ is the formal series
\begin{equation}\label{p2z}
 p_2(z)=z-\frac{1}{4(-z)^{1/2}}+\frac{1}{32(-z)^2}-\frac{5}{2^8(-z)^{7/2}}
  +\frac{9}{512(-z)^5}-\frac{803}{2^{15}(-z)^{13/2}}+\frac{1323}{2^{15}(-z)^8}+\dotsb.
\end{equation}
Similarly, the expansion for $G$ takes the form
\begin{equation}\label{asmg}
 G=1+t^{-2/3}q_2(z)+t^{-1}q_3(z)+t^{-4/3}q_4(z)+\dotsb
\end{equation}
where the coefficients $q_j$ are defined as $z\to-\infty$ by the same asymptotic series as $p_j$. As before, according to the matching method \cite{Il'in}, the asymptotics as $t\to-\infty$, suitable for all finite values of $z$, should be sought in the form of series (\ref{asmf}) and (\ref{asmg}), requiring that 1) the functions $p_2(z)$ and $q_2(z)$ expand in the power asymptotic series (\ref{p2z}) as $z\to-\infty$; 2) all coefficients $p_j(z)$ and $q_j(z)$ be rapidly decreasing as $z\to\infty$.

Substituting the series (\ref{asmf}) and (\ref{asmg}) into the system (\ref{FGz-sys}) with $v=-1$ leads, in the orders $t^{1/3}$, $t^0$ and $t^{-1/3}$, to the equations
\begin{gather*}
 p'_2=q'_2,\qquad q'_2=p'_2,\\
 p'_3=q'_3,\qquad q'_3=p'_3,\\
 p'_4+\frac{zp'_2+2p_2}{3}=q'_4+p_2q'_2+\frac{q'''_2}{24},\qquad 
 q'_4+\frac{zq'_2+2q_2}{3}=p'_4+q_2p'_2+\frac{p'''_2}{24}.
\end{gather*}
It is easy to see, taking into account that $p_j(z)$ and $q_j(z)$ vanish as $z\to\infty$, that these equations are equivalent to
\begin{equation}\label{pqid}
 p_2=q_2,\qquad p_3=q_3,\qquad p_4=q_4
\end{equation}
and the ODE for $p_2$ 
\[
 p'''_2+24p_2p'_2=8zp'_2+16p_2.
\]
Its order can be reduced by integrating with the factor $p_2-z$ and the well-known Miura transformation
\begin{equation}\label{Miura}
 4p_2=y'-y^2
\end{equation}
gives the expression of the sought solution in terms of the special Hastings--McLeod solution of P-II equation (\ref{P-II}), described above. As is easily seen, by virtue of the Miura transformation, the form of the explicitly written terms of the asymptotic expansion as $z\to-\infty$ (\ref{p2z}) follows from the form of the asymptotic expression (\ref{pminf}) of the Hastings--McLeod solution. As $z\to\infty$, the Miura transformation implies that $p_2(z)$ rapidly decreases, like the Hastings--McLeod solution. Thus, we arrive at the following statement.

\begin{proposition}\label{th:in-ser-}
FAS \textnormal{(\ref{asmf})}, \textnormal{(\ref{asmg})} as $t\to-\infty$ are of the form
\begin{equation}\label{int-ser-}
 F=1+\frac{y'(z)-y(z)^2}{4t^{2/3}}+O(t^{-1}),\qquad G=1+\frac{y'(z)-y(z)^2}{4t^{2/3}}+O(t^{-1})
\end{equation} 
where $y(z)$ is the Hastings--McLeod solution.
\end{proposition}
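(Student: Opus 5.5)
The plan is to complete the derivation sketched just before the statement. I would substitute the inner ansatz (\ref{asmf}), (\ref{asmg}) into the stretched system (\ref{FGz-sys}) with $v=-1$ and collect orders $t^{1/3}$, $t^0$, $t^{-1/3}$. The first two orders give $p_2'=q_2'$ and $p_3'=q_3'$. The condition of rapid decay as $z\to+\infty$ fixes the integration constants, so $p_2=q_2$ and $p_3=q_3$. Next I take the difference of the two $t^{-1/3}$ equations; the terms with $p_2q_2'$ and with $p_2'''$, $q_2'''$ cancel, which gives $p_4'=q_4'$ and hence $p_4=q_4$. The sum gives the closed third-order equation $p_2'''+24p_2p_2'=8zp_2'+16p_2$.

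To reduce the order I would multiply by $p_2-z$ (equivalently, observe that $8zp_2'+16p_2 = (8zp_2)'+8p_2$ and look for an integrating combination). This yields a second-order equation with one constant. The constant is fixed to zero by decay at $+\infty$, or alternatively by inserting the leading behaviour $p_2\sim z$ from (\ref{p2z}) at $-\infty$. I would then check directly that the Miura substitution (\ref{Miura}), $4p_2=y'-y^2$, maps solutions of P-II (\ref{P-II}) into solutions of this equation. Differentiating gives $4p_2'=y''-2yy'=2y^3+8zy-2yy'$, and then I would verify the third-order equation identically by substitution; this is a routine symbolic computation.

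The main step is selecting the right solution. I take $y$ to be the Hastings--McLeod solution. As $z\to+\infty$, $y$ and $y'$ decay exponentially, so $p_2$ decays too, which gives requirement 2). As $z\to-\infty$, I insert the power expansion (\ref{pminf}) for $y$, justified by the preceding Remark. With $y=2\sqrt{-z}(1-\tfrac{1}{64}(-z)^{-3}+\dots)$ we get $-y^2/4 = z+\frac{1}{32}(-z)^{-2}+\dots$ and $y'/4=-\frac14(-z)^{-1/2}+\dots$. This should reproduce (\ref{p2z}) term by term, which I would check for the first several coefficients; that gives requirement 1).

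Uniqueness is the delicate point: one must argue that no other solution of the reduced equation has both the expansion (\ref{p2z}) and decay at $+\infty$. I would invert the Miura map by solving the Riccati equation $y'=y^2+4p_2$ for $y$. One then checks that $y$ satisfies P-II with zero parameter, the integration constant being killed by the asymptotics, and that $y\sim2\sqrt{-z}$ at $-\infty$ and $y\to0$ at $+\infty$. The uniqueness of Hastings--McLeod \cite{Hastings_McLeod_1980} then identifies $y$. Since $p_2=q_2$, both $F$ and $G$ take the stated form (\ref{int-ser-}), with remainder $O(t^{-1})$ coming from the $p_3$ terms.
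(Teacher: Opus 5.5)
Your derivation of the inner equations, of $p_2=q_2$, $p_3=q_3$, $p_4=q_4$ and of $p_2'''+24p_2p_2'=8zp_2'+16p_2$, the reduction of order with the factor $p_2-z$, and the check that the Miura image of the Hastings--McLeod solution satisfies both matching conditions follow the paper's argument. The paper does not address uniqueness at all, so up to that point you prove what the paper proves.

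The uniqueness step you add does not work as written, because the conditions you impose do not fix the Riccati constant. The Riccati equation $y'=y^2+4p_2$ has a one-parameter family of solutions. Write $y=y_{\mathrm{HM}}+1/u$; then $u'=-2y_{\mathrm{HM}}u-1$, so
\[
 u=e^{-2\Phi}\Bigl(K-\int_{-\infty}^{z}e^{2\Phi(s)}\,ds\Bigr),\qquad \Phi'=y_{\mathrm{HM}}.
\]
For every $K<0$ this $u$ is negative on all of $\mathbb{R}$. Moreover $1/u=O(e^{-\frac83(-z)^{3/2}})$ as $z\to-\infty$, and $1/u\approx-1/z$ as $z\to+\infty$. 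These $y$ are therefore smooth, have the same expansion (\ref{pminf}), and tend to $0$. Yet $y''-2y^3-8zy=8(p_2-z)/u\not\equiv0$, so they do not solve P-II.

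The missing idea is that the inverse of the Miura map is algebraic. For any solution of (\ref{P-II}) one has $4p_2'=y''-2yy'=-8y(p_2-z)$, so one must set $y:=-p_2'/\bigl(2(p_2-z)\bigr)$. A direct computation shows that this $y$ satisfies $y'-y^2=4p_2$, and hence $y''=2y^3+8zy$, if and only if the first integral
\[
 (p_2-z)p_2''-\tfrac12(p_2'-1)^2+8p_2(p_2-z)^2=C
\]
has $C=-\frac12$. This is exactly the value forced by decay at $+\infty$; it is ``zero'' only if you move the $-\frac12$ to the left side.

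At $-\infty$, $p_2\sim z$ alone does not determine $C$. You need the two leading terms $p_2=z-\frac14(-z)^{-1/2}+\dotsb$, because the limit $8p_2(p_2-z)^2\to-\frac12$ comes from their interplay.

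With this $y$ you get $y\sim2\sqrt{-z}$ as $z\to-\infty$ and $y\approx p_2'/(2z)\to0$ rapidly as $z\to+\infty$. Only then does the uniqueness of the Hastings--McLeod solution identify $y$. Alternatively, the unique Riccati solution decaying faster than $1/z$ at $+\infty$ coincides with this $y$.
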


\begin{remark}
The Hastings--McLeod special solution is widely used in mathematical physics for description of various transient regimes, including transition regions of asymptotic behavior with respect to the discrete independent variable of difference equations \cite{Novokshenov_2022}. Currently, the properties of this special function are well described, and its behavior has been studied numerically. In particular, in our numerical calculations presented in Fig.~\ref{fig:HM}, we used the same program for computing the Hastings--McLeod solution as in \cite{Novokshenov_2009}.
\end{remark}

\section{Correction terms in the case \texorpdfstring{$\alpha\ne1$}{alpha != 1}}\label{s:ab} 

The case of arbitrary $\alpha$ can, in principle, also be studied using a suitable constraint consistent with the dynamics in $t$. However, such a constraint is more complicated than (\ref{constr}) (see \cite[Sec.~5]{Adler_2024}) and leads to rather involved calculations. It is simpler to use the well-known substitution between the Volterra lattice and the Toda lattice in the Flaschka variables \cite{Flaschka_1974}, which allows us to reduce the problem to the case $\alpha=1$. Recall that this substitution reads
\begin{equation}\label{ab}
 a_k=u_{2k-1}+u_{2k},\qquad b_k=u_{2k}u_{2k+1}
\end{equation}
and it transforms the lattice (\ref{VL}) to
\begin{equation}\label{TL}
 \dot a_k=b_k-b_{k-1},\qquad \dot b_k= b_k(a_{k+1}-a_k),
\end{equation}
while the initial-boundary conditions (\ref{VL0}) turn into
\begin{equation}\label{ab0}
 b_0(t)=0,\quad a_1(0)=a_2(0)=\dots=\alpha+\alpha^{-1},\qquad b_1(0)=b_2(0)=\dots=1.
\end{equation}
Note that equations (\ref{TL}) are invariant with respect to the transformation 
\begin{equation}\label{shift}
 \tilde a_k=a_k+\tilde\alpha+\tilde\alpha^{-1}-\alpha-\alpha^{-1},\qquad
 \tilde b_k=b_k
\end{equation}
which does not change the form of the initial data and is equivalent just to choosing a different value of $\alpha$. In other words, solutions of the Volterra lattice corresponding to different $\alpha$ are mapped by the substitution (\ref{ab}) into solutions of the Toda lattice related by a trivial transformation. This property can be used to recalculate the coefficients of the asymptotic series. Moreover, equations (\ref{TL}) and (\ref{ab0}) are also invariant under the transformation
\begin{equation}\label{refl}
 \hat a_k(t) = 2(\alpha+\alpha^{-1})-a_k(-t),\qquad 
 \hat b_k(t)\mapsto b_k(-t),
\end{equation}
which relates the asymptotic series for $t\to+\infty$ and $t\to-\infty$.

In principle, the recalculation can be done without explicitly using the variables $a_k$ and $b_k$, by substituting the asymptotic expansions for $u_n$ and $\tilde u_n$ into the relations
\[
 \tilde u_{2k-1}+\tilde u_{2k} = 
  u_{2k-1}+u_{2k}+\tilde\alpha+\tilde\alpha^{-1}-\alpha-\alpha^{-1},\qquad 
 \tilde u_{2k}\tilde u_{2k+1}=u_{2k}u_{2k+1},
\]
which define a kind of B\"acklund transformation for VL. However, it is somewhat simpler to pre-calculate the expansions for $a_k$ and $b_k$, which are also of independent interest since they describe the asymptotic behavior of the solution to the initial-boundary value problem (\ref{TL}), (\ref{ab0}). So, we set
\begin{equation}\label{ab-series}
\begin{aligned}
 a_k(t)&=A(x,t)=A_0(x)+\frac{A_1(x)}{t}+\frac{A_2(x)}{t^2}+\dotsb,\\
 b_k(t)&=B(x,t)=B_0(x)+\frac{B_1(x)}{t}+\frac{B_2(x)}{t^2}+\dotsb
\end{aligned}
\end{equation}
where $x=k/(vt)$ (this formula agrees with (\ref{x}), since we are now essentially assigning both variables $a_k$ and $b_k$ to the same node $n=2k$ of the old lattice). As in the case of series (\ref{FGseries}), the equations for the coefficients arising from substitution into (\ref{TL}) are underdetermined. In particular, at zeroth order we have the system
\[
 B'_0+vxA'_0=0,\quad vxP'_0+B_0A'_0=0,
\]
which implies
\[
 B_0=v^2x^2,\quad A_0=\operatorname{const}-2vx;
\]
first-order equations are reduced to a single relation $B_1=vx(1-A_1)$, that is, the function $A_1$ is arbitrary; second-order equations allow us to express $A_2$ and $B_2$ through $A_1$ and a single integration constant. However, since the series corresponding to the variables $u_n$ in the case of $\alpha=1$ are already known, we can calculate the series $A$ and $B$ from them, according to the formulas
\begin{equation}\label{ABFG}
 A(x,t)= F\Bigl(x-\frac{1}{2vt},t\Bigr)+G(x,t),\qquad 
 B(x,t)= G(x,t)F\Bigl(x+\frac{1}{2vt},t\Bigr),
\end{equation}
which leads to the relations
\begin{equation}\label{ABfg}
\begin{gathered}
 A_0 = f_0+g_0, \qquad 
 B_0 = f_0g_0, \\
 A_1 = f_1+g_1-\frac{f'_0}{2v}, \qquad
 B_1 = g_0f_1+f_0g_1+\frac{f'_0g_0}{2v},\\
 A_2 = f_2+g_2-\frac{f'_1}{2v}+\frac{f''_0}{8v^2}, \qquad 
 B_2 = g_0f_2+f_0g_2+f_1g_1+\frac{f'_1g_0+f'_0g_1}{2v}+\frac{f''_0g_0}{8v^2},
\end{gathered}
\end{equation}
and so on. We choose as $f_j$ and $g_j$ the functions found in Section \ref{s:a1} for the case $\alpha=1$ and $t\to+\infty$ (with $v=1$) and calculate the coefficients $A_j$ and $B_j$ from them. Then, applying the transformations (\ref{shift}) and (\ref{refl}), we arrive at the following formulas (as always, $\sqrt{1-x}$ denotes the positive value of the root).

\begin{proposition}\label{th:ab-series}
For the Toda lattice solution with initial data \textnormal{(\ref{ab0})}, the first few coefficients of the asymptotic series \textnormal{(\ref{ab-series})} as $t\to\pm\infty$ are equal to
\begin{gather*}
 A_0= \alpha+\alpha^{-1}\pm2(1-x),\quad 
 A_1= \frac{1}{2\sqrt{1-x}},\quad 
 A_2= \mp\frac{1}{16(1-x)^2},\quad 
 A_3= \frac{(4+x)\sqrt{1-x}}{128(1-x)^4},\\
 B_0= x^2,\qquad 
 B_1= \pm x \mp\frac{x}{2\sqrt{1-x}},\qquad
 B_2= \frac{4-6x+3x^2}{16(1-x)^2}-\frac{(2-x)\sqrt{1-x}}{8(1-x)^2},\\
 B_3= \pm\frac{1}{16(1-x)^3} \mp\frac{(8-4x+x^2)\sqrt{1-x}}{128(1-x)^4}.
\end{gather*}
\end{proposition}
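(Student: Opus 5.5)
The computation goes through the case $\alpha=1$, $t\to+\infty$, $v=1$, and then transports the result. For the Toda solution with parameter $\alpha=1$ I take the coefficients $f_j,g_j$ from Proposition~\ref{th:alpha1-series}, where $g_j$ is $f_j$ with $\sqrt{1-x}$ replaced by $-\sqrt{1-x}$, and insert them into (\ref{ABfg}), extended one order further to get $A_3,B_3$. That extension comes from expanding $F(x-\frac1{2t},t)+G(x,t)$ and $G(x,t)F(x+\frac1{2t},t)$ in Taylor series to order $t^{-3}$.

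The zeroth order is immediate. We have $f_0+g_0=4-2x$ and $f_0g_0=x^2$, which gives $A_0=2+2(1-x)$ and $B_0=x^2$.

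In the next orders the terms odd in $\sqrt{1-x}$ must combine. For example, $A_1=f_1+g_1-\frac12 f_0'$. Since $f_1+g_1=-\frac12$ and $f_0'=-1-(1-x)^{-1/2}$, this gives $A_1=\frac{1}{2\sqrt{1-x}}$. Likewise $B_1=g_0f_1+f_0g_1+\frac12 f_0'g_0$. The products such as $(2-x-2\sqrt{1-x})(1+(1-x)^{-1/2})$ simplify after writing $2-x=1+(1-x)$, and the expected result is $x-\frac{x}{2\sqrt{1-x}}$. The orders $j=2,3$ are treated the same way using $f_2,f_3$. This is routine but lengthy algebra with rational functions of $s=\sqrt{1-x}$, best done by setting $x=1-s^2$ and collecting powers of $s$. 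The main obstacle I expect is keeping the bookkeeping of the third-order Taylor terms consistent, specifically the coefficients $\frac{1}{2v}$, $\frac{1}{8v^2}$, $\frac{1}{48v^3}$ with the correct signs in $F(x\mp\frac1{2t})$. As a cross-check I would verify that the resulting $A_j,B_j$ satisfy the Toda recursions obtained by substituting (\ref{ab-series}) into (\ref{TL}), for instance $B_1=x(1-A_1)$.

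To pass to general $\alpha$, I apply the shift (\ref{shift}), which only adds the constant $\alpha+\alpha^{-1}-2$ to $a_k$. This replaces $A_0$ by $\alpha+\alpha^{-1}+2(1-x)$ and leaves all other coefficients unchanged, which gives the upper signs.

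For $t\to-\infty$ I use the reflection (\ref{refl}), $\hat a_k(t)=2(\alpha+\alpha^{-1})-a_k(-t)$ and $\hat b_k(t)=b_k(-t)$. Because $x=k/(vt)$ with $v=\pm1$, the variable $x$ is unchanged under $t\to-t$, $v\to-v$, while $t^{-j}$ picks up the factor $(-1)^j$. Consequently
\[
\hat A_0=2(\alpha+\alpha^{-1})-A_0,\qquad \hat A_j=(-1)^{j+1}A_j\ (j\ge1),\qquad \hat B_j=(-1)^jB_j .
\]
This reproduces the pattern of $\pm$ and $\mp$ signs in the statement: $A_1$ and $A_3$ keep their sign, $A_2$ flips, $B_1$ and $B_3$ flip, and $B_0$ and $B_2$ are unchanged.

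One point must be checked: that (\ref{refl}) really maps the solution with data (\ref{ab0}) to itself. Once this holds, uniqueness of the solution of the initial-boundary value problem transfers the $t>0$ series to the $t<0$ series. The check is direct, since the equations (\ref{TL}) are invariant under $a\to\text{const}-a$, $t\to-t$, and the initial and boundary values are preserved.
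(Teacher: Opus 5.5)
Your proposal is correct and follows the paper's own route. You take the $\alpha=1$, $t\to+\infty$ coefficients of Proposition~\ref{th:alpha1-series} and push them through (\ref{ABfg}); your third-order extension does reproduce $A_3=\frac{4+x}{128(1-x)^{7/2}}$ and the stated $B_3$. You then apply the shift (\ref{shift}) and the reflection (\ref{refl}), and your explicit rules $\hat A_0=2(\alpha+\alpha^{-1})-A_0$, $\hat A_j=(-1)^{j+1}A_j$, $\hat B_j=(-1)^jB_j$ correctly account for the $\pm$ and $\mp$ pattern in the statement.
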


Now, to find the coefficients $f_j$ and $g_j$ corresponding to an arbitrary value of $\alpha$, we need to invert the mapping (\ref{ABfg}) between the coefficients of the series. It is easy to see that the first pair of equations coincides with (\ref{fg1}) and (\ref{fg2}), that is, they define the same functions $f_0$ and $g_0$ as in Proposition \ref{th:fg-series}, up to the sign of the root, which we have established by other reasoning. The subsequent equations (\ref{ABfg}) have the form
\[
 A_j=f_j+g_j+\dotsb,\qquad B_j=g_0f_j+f_0g_j+\dotsb,
\]
where the dots denote terms containing only $f_i$ and $g_i$ with $i<j$. These equations are uniquely solvable with respect to $f_j$ and $g_j$, since $f_0-g_0\ne0$ (except for $\alpha=1$ and $t<0$, but for this case the answer is already known). Using this scheme, one can, in principle, find an arbitrary number of coefficients. Solving the equations for $j=1$, we find the functions (\ref{fg1p}) and (\ref{fg1m}), which completes the proof of Proposition \ref{th:fg-series}. The expressions for the next coefficients $f_2$ and $g_2$ are already quite cumbersome.

\begin{remark}
The substitution
\begin{equation}\label{ab'}
 a_k=u_{2k-2}+u_{2k-1},\qquad b_k=u_{2k-1}u_{2k},
\end{equation}
which differs from (\ref{ab}) only by changing the index of $u_n$, also brings to the lattice equations (\ref{TL}) with boundary condition $b_0(t)=0$, but with slightly different initial data:
\[
 a_1(0)=\alpha,\qquad a_2(0)=a_3(0)=\dots=\alpha+\alpha^{-1},\qquad b_1(0)=b_2(0)=\dots=1.
\]
It is easy to see that in this case the shift $a_k\to a_k+\operatorname{const}$ can no longer be compensated by changing $\alpha$, so this substitution is not suitable for recalculating the series. However, it can be used for generating new solutions while preserving the boundary condition. In particular, note that under the substitution (\ref{ab}), the bipartite kink (\ref{VL_kink}) maps to the constant solution $b_k=1$, $a_k=\alpha+\alpha^{-1}$, while the substitution (\ref{ab'}) transforms it to a kink of the Toda lattice.
\end{remark}

\section{Asymptotics near the point \texorpdfstring{$x=1$}{x=1} for \texorpdfstring{$\alpha\neq 1$}{alpha != 1}}\label{s:x1}

The transformation to the Toda lattice allows us to calculate for an arbitrary $\alpha$ the asymptotics of the solution near the point $x=1$, obtained in Section \ref{s:x1a1}. First, we write out the asymptotics for the solution of the Toda lattice itself. Relations (\ref{ABFG})
\begin{align*}
 A(x,t)&= F\Bigl(x-\frac{1}{2vt},t\Bigr)+G(x,t)
  = G(x,t) +F(x,t)-\frac{F_x(x,t)}{2vt}+\frac{F_{xx}(x,t)}{8vt^2}+\dotsb,\\
 B(x,t)&= G(x,t)F\Bigl(x+\frac{1}{2vt},t\Bigr)
  = G(x,t)\left(F(x,t)+\frac{F_x(x,t)}{2vt}+\frac{F_{xx}(x,t)}{8vt^2}+\dotsb\right)
\end{align*}
take the following form after passing to the stretched variable (\ref{z}) $z=(x-1)t^{2/3}$:
\begin{equation}\label{ABz}
 A= G +F-\frac{F_z}{2vt^{1/3}}+\frac{F_{zz}}{8vt^{2/3}}+\dotsb,\qquad
 B= G\left(F+\frac{F_z}{2vt^{1/3}}+\frac{F_{zz}}{8vt^{2/3}}+\dotsb\right).
\end{equation}
For $t>0$, we choose $v=1$ and substitute expansions (\ref{aspf}) and (\ref{aspg}) instead of $F$ and $G$, which gives
\[
 A=2+\frac{p_1+q_1}{t^{1/3}}+\frac{p_2+q_2-p'_1/2}{t^{2/3}}+\dotsb,\qquad
 B=1+\frac{p_1+q_1}{t^{1/3}}+\frac{p_2+q_2+p_1q_1+p'_1/2}{t^{2/3}}+\dotsb.
\]
From here it follows, by virtue of the identities (\ref{q12}), that for $\alpha=1$ the inner expansion for $t\to+\infty$, valid for all finite $z$, has the form
\[
 A=2-\frac{y'-y^2}{2t^{2/3}}+\dotsb,\qquad B=1+\frac{y'-y^2}{2t^{2/3}}+\dotsb
\]
where $y(z)$ is the special Hastings--McLeod solution of P-II equation (\ref{P-II}). For arbitrary $\alpha$, it suffices to apply the transformation (\ref{shift}), which yields
\begin{equation}\label{ABy+}
 A=\alpha+\alpha^{-1}-\frac{y'-y^2}{2t^{2/3}}+\dotsb,\qquad
 B=1+\frac{y'-y^2}{2t^{2/3}}+\dotsb,\qquad t\to+\infty.
\end{equation}
Similarly, for $t<0$ we choose $v=-1$ in (\ref{ABz}) and replace $F$ and $G$ with the expansions (\ref{asmf}) and (\ref{asmg}), which gives
\[
 A=2+\frac{p_2+q_2}{t^{2/3}}+\dotsb,\qquad B=1+\frac{p_2+q_2}{t^{2/3}}+\dotsb.
\]
Applying the shift (\ref{shift}) and taking into account the identities (\ref{pqid}) and (\ref{Miura}), we obtain an inner expansion as $t\to-\infty$, suitable for all finite $z$ and any $\alpha$:
\begin{equation}\label{ABy-}
 A=\alpha+\alpha^{-1}+\frac{y'-y^2}{2t^{2/3}}+\dotsb,\qquad B=1+\frac{y'-y^2}{2t^{2/3}}+\dotsb,\qquad t\to-\infty,
\end{equation}
where $y$ is the Hastings--McLeod solution.

Now we need to return back to the Volterra lattice. To do this, we will use the results of Proposition \ref{th:fg-series}. From these it follows that after passing to the stretched variable (\ref{z}), the expansions (\ref{FGseries}) for $t\to\infty$ take the form
\begin{align*}
 F&= \mu+\frac{\mu}{(\mu-1)t^{2/3}}
  \left(-2z+\frac{1}{2(-z)^{1/2}}+\dotsb\right)+\dotsb, \\ 
 G&= \frac{1}{\mu}+\frac{1}{(\mu-1)t^{2/3}}
  \left(2z-\frac{1}{2(-z)^{1/2}}+\dotsb\right)+\dotsb
\end{align*}
where $\mu=\max(\alpha,\alpha^{-1})$, and the expansions for $t\to-\infty$ have the form
\begin{align*}
 F&= \alpha+\frac{\alpha}{(\alpha+1)t^{2/3}}
  \left(2z+\frac{1}{2(-z)^{1/2}}+\dotsb\right)+\dotsb,\\ 
 G&= \frac{1}{\alpha}+\frac{1}{(\alpha+1)t^{2/3}}
  \left(2z+\frac{1}{2(-z)^{1/2}}+\dotsb\right)+\dotsb.
\end{align*}
Therefore, the asymptotic solutions $F$ and $G$, valid for all finite $z$, should be written as the following series. For $t\to+\infty$:
\begin{equation}\label{FG-in+}
 F=\mu+\frac{\psi_2(z)}{t^{2/3}} +\frac{\psi_3(z)}{t}+\dotsb, \qquad
 G=\frac{1}{\mu}+\frac{\varphi_2(z)}{t^{2/3}} +\frac{\varphi_3(z)}{t}+\dotsb,
\end{equation}
where all coefficients rapidly tend to zero as $z\to\infty$, and the coefficients $\psi_2(z)$ and $\varphi_2(z)$ are expanded, as $z\to-\infty$, into the power asymptotic series 
\begin{equation}\label{psi+z}
 \psi_2(z)=\frac{\mu}{\mu-1}\left(-2z+\frac{1}{2(-z)^{1/2}}+\dotsb\right),\qquad 
 \varphi_2(z)=\frac{1}{\mu-1}\left(2z-\frac{1}{2(-z)^{1/2}}+\dotsb\right);
\end{equation}
for $t\to-\infty$:
\begin{equation}\label{FG-in-}
 F=\alpha+\frac{\psi_2(z)}{t^{2/3}} +\frac{\psi_3(z)}{t}+\dotsb, \qquad
 G=\frac{1}{\alpha}+\frac{\varphi_2(z)}{t^{2/3}} +\frac{\varphi_3(z)}{t}+\dotsb,
\end{equation}
where all coefficients rapidly tend to zero as $z\to\infty$, and the coefficients $\psi_2(z)$ and $\varphi_2(z)$ are expanded, as $z\to-\infty$, into the power asymptotic series 
\begin{equation}\label{psi-z}
 \psi_2(z)=\frac{\alpha}{\alpha+1}\left(2z+\frac{1}{2(-z)^{1/2}}+\dotsb\right),\qquad 
 \varphi_2(z)=\frac{1}{\alpha+1}\left(2z+\frac{1}{2(-z)^{1/2}}+\dotsb\right). 
\end{equation}
Direct substitution of the expansions (\ref{FG-in+}) and (\ref{FG-in-}) into the relations (\ref{ABz}) gives
\begin{gather*}
 A=\mu+\mu^{-1}+\frac{\psi_2+\varphi_2}{t^{2/3}}+\dotsb,\quad 
 B=1+\frac{\mu^{-1}\psi_2+\mu\varphi_2}{t^{2/3}}+\dotsb,\quad t\to+\infty,\\
 A=\alpha+\alpha^{-1}+\frac{\psi_2+\varphi_2}{t^{2/3}}+\dotsb,\quad 
 B=1+\frac{\alpha^{-1}\psi_2+\alpha\varphi_2}{t^{2/3}}+\dotsb,\quad t\to-\infty.
\end{gather*}
By comparing these relations with (\ref{ABy+}) and (\ref{ABy-}) we obtain the following systems of linear equations for $\psi_2$ and $\varphi_2$:
\begin{alignat*}{3}
 &\psi_2+\varphi_2=-\frac{y'-y^2}{2},\qquad&  
 &\mu^{-1}\psi_2+\mu\varphi_2=\frac{y'-y^2}{2} \qquad& 
 &(\text{for}~ t\to \infty);\\
 &\psi_2+\varphi_2=\frac{y'-y^2}{2}, &
 &\alpha^{-1}\psi_2+\alpha\varphi_2=\frac{y'-y^2}{2} &
 &(\text{for}~ t\to-\infty).
\end{alignat*}
Solving these linear systems, we express $\psi_2$ and $\varphi_2$ in terms of the function $y'-y^2$. From the obtained formulas and the form of the power asymptotics (\ref{pminf}) of the Hastings--McLeod solution $y(z)$, it follows that the coefficients $\psi_2$ and $\varphi_2$ indeed have asymptotics of the form (\ref{psi+z}), (\ref{psi-z}) as $z\to-\infty$. As $z\to\infty$, these coefficients rapidly decrease, similar to the Hastings--McLeod solution. Finally, we arrive at the following proposition, which complements Propositions \ref{th:in-ser+} and \ref{th:in-ser-}.

\begin{proposition}\label{th:in-ser-alpha} 
For $\alpha\ne1$, the formal asymptotic solutions $F$ and $G$ for $t\to\pm\infty$, suitable for all finite values of $z=(x-1)t^{2/3}$ have the form
\begin{align*}
 & F=\mu-\frac{\mu(y'-y^2)}{2(\mu-1)t^{2/3}} +O(t^{-1}),\qquad 
   G=\frac{1}{\mu}+\frac{y'-y^2}{2(\mu-1)t^{2/3}} +O(t^{-1}),\qquad t\to+\infty,\\
 & F=\alpha+\frac{\alpha(y'-y^2)}{2(\alpha+1)t^{2/3}} +O(t^{-1}),\qquad 
   G=\frac{1}{\alpha}+\frac{y'-y^2}{2(\alpha+1)t^{2/3}} +O(t^{-1}),\qquad t\to-\infty
\end{align*}
where $\mu=\max(\alpha,\alpha^{-1})$ and $y(z)$ is the Hastings--McLeod solution.
\end{proposition}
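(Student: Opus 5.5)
The plan is to obtain Proposition \ref{th:in-ser-alpha} by transferring the inner expansions for the Toda lattice, (\ref{ABy+}) and (\ref{ABy-}), back to the Volterra variables. The ansatz (\ref{FG-in+}) or (\ref{FG-in-}) is substituted into the relations (\ref{ABz}), and the coefficients of $t^{2/3}$ are matched. This approach relies on two facts: the Toda lattice solutions for different $\alpha$ differ only by the shift (\ref{shift}), and for $\alpha=1$ their inner expansions are already known from Propositions \ref{th:in-ser+} and \ref{th:in-ser-}.

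First I would verify the form of the inner ansatz. For this I would rewrite the outer coefficients $f_0,g_0,f_1,g_1$ of Proposition \ref{th:fg-series} in the variable $z=(x-1)|t|^{2/3}$ and expand for finite $z<0$.
\begin{itemize}
\item For $t>0$ we have $c-x=(c-1)-zt^{-2/3}$ with $c>1$ when $\alpha\ne1$. Hence $\sqrt{c(c-x)}$ is regular at $x=1$, and $f_0=\mu+O(zt^{-2/3})$.
\item The only singular factor is $\sqrt{1-x}=(-z)^{1/2}t^{-1/3}$ in $f_1$ and $g_1$. Dividing by $t$, it contributes at order $t^{-2/3}$ the terms $(-z)^{-1/2}$ appearing in (\ref{psi+z}).
\item The identity $4c=(\sqrt\mu+1/\sqrt\mu)^2$ gives the prefactors $\mu/(\mu-1)$ and $1/(\mu-1)$.
\item The case $t<0$ is analogous, using $4c=(\sqrt\alpha-1/\sqrt\alpha)^2$.
\end{itemize}
This shows that there is no $t^{-1/3}$ term when $\alpha\ne1$, so the ansatz begins at order $t^{-2/3}$.

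Next I would expand (\ref{ABz}) with this ansatz. The derivative terms $F_z/(2vt^{1/3})$ contribute only at order $t^{-1}$, because $F_z=O(t^{-2/3})$. The order-$t^{-2/3}$ parts are therefore $\psi_2+\varphi_2$ for $A$, and $\mu^{-1}\psi_2+\mu\varphi_2$ (respectively $\alpha^{-1}\psi_2+\alpha\varphi_2$) for $B$. Equating these with (\ref{ABy+}) and (\ref{ABy-}) gives a $2\times2$ linear system with determinant $\mu-\mu^{-1}$ (respectively $\alpha-\alpha^{-1}$). This determinant is nonzero for $\alpha\ne1$. Solving by Cramer's rule gives, for $t>0$,
\[
\varphi_2=\frac{y'-y^2}{2(\mu-1)},\qquad \psi_2=-\frac{\mu(y'-y^2)}{2(\mu-1)},
\]
and the analogous formulas with $\alpha+1$ for $t<0$.

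Finally I would check the matching conditions.
\begin{itemize}
\item As $z\to\infty$, rapid decay follows from that of the Hastings--McLeod solution.
\item As $z\to-\infty$, insert (\ref{pminf}) into $y'-y^2$. This gives $y'-y^2=4z-(-z)^{-1/2}+\dotsb$, the same computation that produced (\ref{p2z}) through (\ref{Miura}). Multiplying by the prefactors then reproduces (\ref{psi+z}) and (\ref{psi-z}).
\end{itemize}

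The main obstacle I expect is justifying that the Toda lattice inner expansion for general $\alpha$ is really (\ref{ABy+}), obtained by shifting the $\alpha=1$ result. The shift (\ref{shift}) changes $A$ only by a constant, so it commutes with passing to the stretched variable. The remaining point is to check, through careful bookkeeping of the $\mu$-dependent constants, that the outer expansion for $A$ and $B$ has no $t^{-1/3}$ term for any $\alpha$. This is where I would concentrate the verification.
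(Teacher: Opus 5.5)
Your proposal is correct and follows the paper's own argument essentially step for step: you rewrite the outer coefficients in $z$ to fix the ansatz (\ref{FG-in+}), (\ref{FG-in-}), substitute into (\ref{ABz}), compare with (\ref{ABy+}), (\ref{ABy-}) and solve the $2\times2$ systems; moreover, the point you single out as the main obstacle is automatic, because the Toda solutions for all $\alpha$ differ only by the constant shift (\ref{shift}), so the form of (\ref{ABy+}), (\ref{ABy-}) is inherited from the case $\alpha=1$, where the $t^{-1/3}$ terms cancel since $q_1=-p_1$. One caution for your final matching check: your expansion $y'-y^2=4z-(-z)^{-1/2}+\dotsb$ is right, and expanding (\ref{fg1m}) directly near $x=1$ gives $f_1/t\approx-\frac{\alpha}{2(\alpha+1)}(-z)^{-1/2}|t|^{-2/3}$ for $t<0$, so the $(-z)^{-1/2}$ terms in (\ref{psi-z}) should carry a minus sign (as (\ref{p2z}) confirms at $\alpha=1$); the plus sign printed there is a typo, not a failure of the matching.
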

 
\section{Uniform asymptotics of one regular solution of the fifth Painlev\'e equation}\label{s:P5} 

For $\alpha=1$, according to \cite[Proposition 2]{Adler_Shabat_2019}, the function
\begin{equation}\label{shabdN1}
\rho(t) =\rho_n(t)=1-\frac{8t}{2t(u_{2n}+u_{2n+1})+4n+2}
\end{equation} 
satisfies the fifth Painlev\'e equation
\begin{equation}\label{painf}
 \rho''=\left(\frac{1}{2\rho}+\frac{1}{\rho-1}\right)(\rho')^2-\frac{\rho'}{t}
  +\frac{(\rho-1)^2}{8t^2}\left(\rho-\frac{1}{\rho}\right)
  -\frac{2(4n+2)\rho}{t}-\frac{8\rho(\rho+1)}{\rho-1}.
\end{equation}
This solution is regular at the origin:
\[
 \rho(t)=1-\frac{4t}{2n+1}+O(t^2),\qquad t\to 0.
\]
As a simple application of results obtained in previous sections, we describe here, by passing from $n$ to the independent variable $x=n/|t|$, the asymptotics of this solution as $t\to\pm\infty$, which is uniform for $n\geqslant 0$. According to Proposition \ref{th:alpha1-series}, for $0\leqslant x<1$ and $t\to\infty$
\begin{gather*}
 2t(u_{2n+1} +u_{2n})+4xt+2=2t(f_0(x)+g_0(x)+2x)+2(f_1(x)+g_1(x)+1)+f'_0(x)+O(t^{-1})\\
 =8t-\frac{1}{(1-x)^{1/2}}+O(t^{-1}).
\end{gather*} 
From the representation (\ref{shabdN1}) and this formula it follows that 
\begin{equation}\label{outlp}
 \rho(t)=-\frac{1}{8t(1-x)^{1/2}}+O(t^{-2}),
\end{equation} 
as $t\to\infty$ and for non-negative values of $x$ to the left of a small neighborhood of $x=1$. To the right of this neighborhood, from the fact that the solution of the Volterra lattice tends to the solution $u_k(t)=1$ and from the representation (\ref{shabdN1}), it follows that this solution of P-V equation (\ref{painf}) has the asymptotics
\begin{equation}\label{outrp}
 \rho(t)=\frac{x-1}{x+1}+O(t^{-1}).
\end{equation}
According to Proposition \ref{th:in-ser+}, we have
\[
 2t(u_{2n+1} +u_{2n})+4xt+2=8t+\frac{y'_z+y^2+4z}{t^{1/3}}+O(t^{-1})
\]
for finite values of the stretched variable $z=(x-1)t^{-2/3}$, where $y(z)$ is the special Hastings--McLeod solution of P-II equation (\ref{P-II}), which exponentially tends to zero as $z\to\infty$ and has a power-law asymptotic expansion (\ref{pminf}) as $z\to-\infty$. Therefore, the representation (\ref{shabdN1}) implies that for finite $z$ and $t\to\infty$ the asymptotic expansion of this solution of P-V equation has the form
\begin{equation}\label{promp}
 \rho(t)=\frac{y'_z+y^2+4z}{8t^{2/3}}+O(t^{-1}).
\end{equation}
From the asymptotic behavior of $y(z)$ as $z\to\pm\infty$ it is clear that the leading term of this inner expansion is consistent with both parts of the outer expansion, which are described by formulas (\ref{outlp}) and (\ref{outrp}) outside a neighborhood of the point $x=1$ which is small as $t\to\infty$. Indeed, 
\[
 -\frac{1}{8t(1-x)^{1/2}}=-\frac{1}{8t^{2/3}(-z)^{1/2}}
\]
and
\[
 \frac{x-1}{x+1}=\frac{z}{2t^{2/3}}+O(t^{-2/3}).
\]

We now turn to the description of the uniform in $n$ asymptotics of $\rho(t)$ as $t\to-\infty$. The reasoning and calculations here are basically the same as in the previous paragraph. Given that for negative $t$ we have $n=-xt$, the following relation holds to the left of the point $x=1$, according to Proposition \ref{th:alpha1-series}:
\begin{gather*}
 2t(u_{2n+1} +u_{2n})+4n+2=2t(f_0(x)+g_0(x)-2x)+2(f_1(x)+g_1(x)+1)-f'_0(x)+O(t^{-1})\\
 =\frac{1}{(1-x)^{1/2}}+O(t^{-1}).
\end{gather*}
Therefore, the representation (\ref{shabdN1}) implies that for $t\to-\infty$ and to the left of the point $x=1$ outside its small neighborhood, the asymptotics is 
\begin{equation}\label{outlm}
 \rho=-8t(1-x)^{1/2}+O(1).
\end{equation}
To the right of this small neighborhood, from the representation (\ref{shabdN1}) and the fact that the solution of the Volterra lattice tends to $u_k(t)=1$, the validity of the asymptotics follows
\begin{equation}\label{outrm}
 \rho(t)=\frac{x+1}{x-1}+O(t^{-1}).
\end{equation}
In this very neighborhood, according to the representation (\ref{shabdN1}) and Proposition \ref{th:in-ser-}, the asymptotics is valid for $t\to-\infty$
\begin{equation}\label{promm}
 \rho(t)=-\frac{8t^{2/3}}{y'_z-y^2-4z}+O(t^{1/3}),\qquad (z=(x-1)t^{2/3}), 
\end{equation}
where $y(z)$ is the Hastings--McLeod solution. From the asymptotic behavior of $y(z)$ and the relations (for negative $t$, the equality $(1-x)^{1/2}=-t^{1/3}(-z)^{1/2}$ holds)
\[
 -8t(1-x)^{1/2}=8t^{2/3}(-z)^{1/2},\qquad \frac{x+1}{x-1}=\frac{2t^{2/3}}{z} +O(1),
\]
we conclude that the leading term of the inner asymptotic expansion (\ref{promm}) is consistent with both parts (\ref{outlm}) and (\ref{outrm}) of the outer asymptotic expansion of $\rho(t)$.

\section{Shock waves in the Volterra and Toda lattices}\label{s:shock} 

Here we present numerical results in the case of a more general, non-zero boundary condition at $n=0$:
\begin{equation}\label{VLshock}
\begin{gathered}
 \dot u_n=u_n(u_{n+1}-u_{n-1}),\qquad n=1,2,\dotsc,\\
 u_0(t)=\beta,\qquad u_1(0)=u_3(0)=\dotsb=\alpha,\qquad u_2(0)=u_4(0)=\dotsb=1/\alpha
\end{gathered}
\end{equation}
where $\alpha>0$ and $\beta\ge0$ are constant parameters. The behavior of solutions for $\beta\ne0$ is more complex, but for some values of $\beta$ they still contain a self-similar region described by the ansatz (\ref{x}), (\ref{uFG}). The formulas for the leading coefficients $f_0(x)$ and $g_0(x)$ of the asymptotics given in Proposition \ref{th:fg-series} remain unchanged, since their derivation was fairly universal. Only the interval on which this asymptotics works has changed: now it may not be the entire segment $x\in[0,1]$, but a shorter one $[x_0,1]$. For some values of $\beta$, the self-similar region is completely absent. The solution regime is determined by fairly simple geometry: roughly speaking, the point $x_0$ is the abscissa of the intersection of the line $u=\beta$ with one of the branches of the limit parabola defined by the graphs $u=f_0(x)$ and $u=g_0(x)$. In general, comparison with the solutions presented in the previous sections already gives some idea of the solutions to the problem (\ref{VLshock}), although their detailed description requires further study. 

\begin{figure}[t]
\centerline{\includegraphics[width=0.95\textwidth]{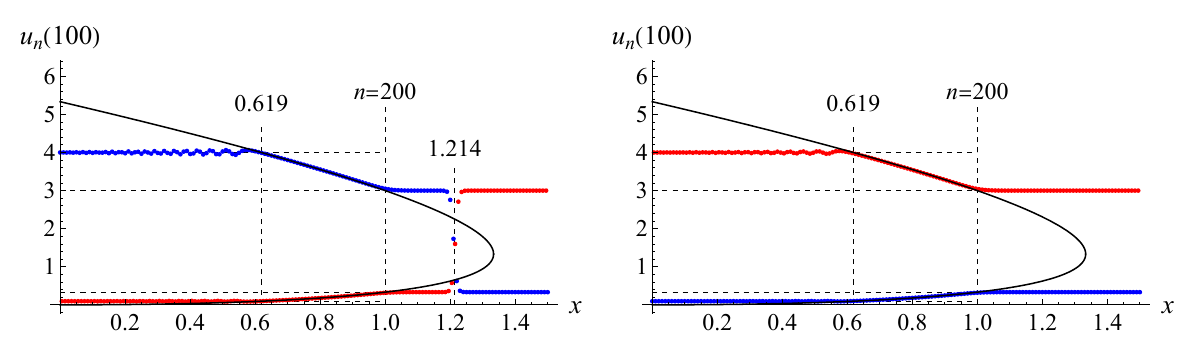}}
\caption{Solutions of the problem (\ref{VLshock}) for $t=100$, $\alpha=3$ and $\beta=4$ (left), $\beta=0.0957$ (right).}
\label{fig:sh1} 
\end{figure}

\begin{figure}[t]
\centerline{\includegraphics[width=0.95\textwidth]{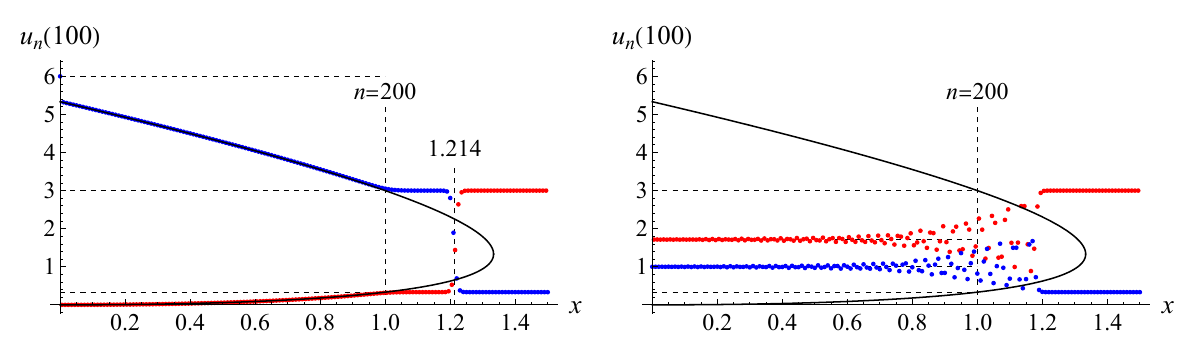}}
\caption{Solutions of the problem (\ref{VLshock}) for $t=100$, $\alpha=3$ and $\beta=6$ (left), $\beta=1$ (right).}
\label{fig:sh2} 
\end{figure}

We first consider the evolution for $t>0$. In this case, the parabola is defined by equations
\[
 f_0g_0=x^2,\qquad f_0+g_0=4c-2x,\qquad 4c=\alpha+1/\alpha+2,
\]
and the abscissa of its intersection with the line $u=\beta\ge0$ is
\[
 x_0= 2\sqrt{c\beta}-\beta
\]
(the second solution $x_0=-2\sqrt{c\beta}-\beta$ is obviously negative and is of no interest). The same point $x_0$ corresponds to another value $\beta$, equal to
\[
 \tilde\beta=4c+\beta-4\sqrt{c\beta}.
\]
If $\beta$ lies in one of the intervals $[0,1/\mu]$ or $[\mu,4c]$, where $\mu=\max(\alpha,1/\alpha)$, then $\tilde\beta$ lies in the other one and $x_0<1$. Solutions corresponding to the values
\[
 \mu=\alpha=3,\qquad \beta=4,\qquad \tilde\beta\approx 0.0957,\qquad x_0\approx 0.619
\]
are shown in Fig.~\ref{fig:sh1}, where dashed horizontal lines correspond to the ordinates $\mu$, $1/\mu$, $\beta$, and $\tilde\beta$ (cf. Fig.~\ref{fig:gr3}, which corresponds to the same value of $\alpha$). On the interval from $0$ to $x_0$, a certain transition regime is observed; to the right of $x_0$, the solution demonstrate the familiar self-similar behavior, and near $x=1$ it approaches the initial steady state. Replacing the boundary condition $u_0(t)=\beta$ with $u_0(t)=\tilde\beta$ effectively leads to a reversal of the roles of variables with even and odd $n$, as well as replacing $\alpha$ with $1/\alpha$. As a result, in one of the solutions, a kink is formed at the point $x_*>1$, which was described in Section \ref{s:leading}.

If $\beta\ge4c$ then $x_0\le0$, and the solution quickly tends to the self-similar regime on the entire interval $[0,1]$, that is, essentially, this solution behaves the same as under the zero boundary condition $\beta=0$ (Fig.~\ref{fig:sh2} on the left). Finally, if $1/\mu<\beta<\mu$, then $x_0>1$, and there is no self-similar region. In this case, the solution is not a decay wave, but a faster shock wave (Fig.~\ref{fig:sh2} on the right). The left diagram in Fig.~\ref{fig:diagrams} shows the regions in the quadrant $\alpha>0$ and $\beta\ge0$ corresponding to the described types of solutions.

\begin{figure}[t]
\centerline{\includegraphics[width=0.38\textwidth]{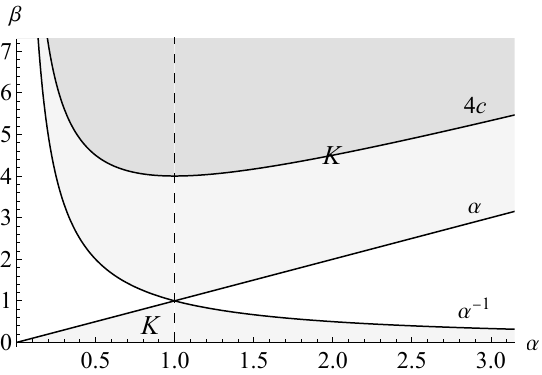}\qquad\qquad
\includegraphics[width=0.38\textwidth]{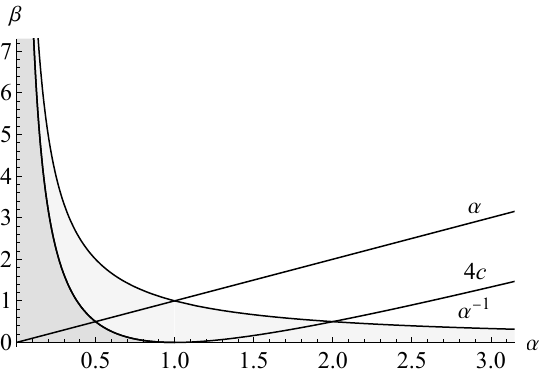}}
\caption{Regions with different solution types depending on $\alpha$ and $\beta$. Left: $t>0$, right: $t<0$. White color: shock waves, light gray: self-similar regime on part of the interval $[0,1]$, dark gray: self-similar regime on the entire interval $[0,1]$ (the $\beta=0$ axis also belongs to this region). Letter $K$ marks regions in which the solution has a reversal at the point $x_*$.}\label{fig:diagrams} 
\end{figure}

For $t<0$, the limit parabola is defined by equations
\[
 f_0g_0=x^2,\qquad f_0+g_0=4c+2x,\qquad 4c=\alpha+1/\alpha-2.
\]
If $\alpha<1$ then the solution is asymptotically self-similar on the entire interval $x\in[0,1]$ for all $\beta\in[0,4c]$; for $\beta\in[4c,1/\alpha]$ there is a self-similar sector on the interval $[x_0,1]$, where $x_0=\beta-2\sqrt{c\beta}$; for $\beta>1/\alpha$ the point $x_0$ lies to the right of $1$ and the solution behaves like a shock wave. If $\alpha>1$, then for $\beta\in[0,1/\alpha]$ a self-similar sector $[x_0,1]$ is observed, where $x_0=\beta+2\sqrt{c\beta}$, and for $\beta>1/\alpha$ we again have a shock wave. These types of solutions are illustrated by the right diagram in Fig.~\ref{fig:diagrams}. Fig.~\ref{fig:sh3} shows the solutions corresponding to $\alpha=1/3$ and $\alpha=3$.

\begin{figure}[t]
\centerline{\includegraphics[width=0.95\textwidth]{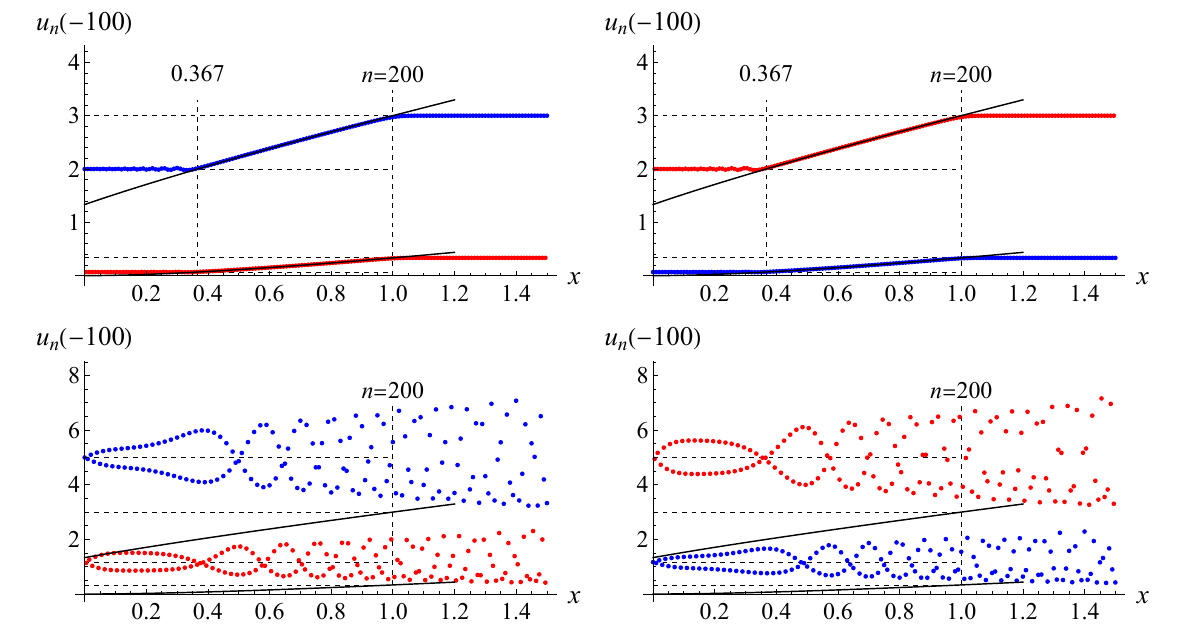}}
\caption{Solutions of the problem (\ref{VLshock}) for $t=-100$. Left: $\alpha=1/3$ and $\beta=2$ (top), $\beta=5$ (bottom). Right: $\alpha=3$ and $\beta=0.0673$ (top), $\beta=1.169$ (bottom).}
\label{fig:sh3} 
\end{figure}

To conclude this section, let us compare (\ref{VLshock}) with the Toda lattice shock wave problem studied in \cite{Holian_Straub_1978, Holian_Flaschka_McLaughlin_1981, Venakides_Deift_Oba_1991, Kamvissis_1993, Deift_Kamvissis_Kriecherbauer_Zhou_1996}. In the Flaschka variables, it is written as the initial-boundary value problem
\begin{gather}
\label{TL_k}
 \dot a_k=b_k-b_{k-1},\quad k=1,2,3,\dotsc,\qquad 
 \dot b_k=b_k(a_{k+1}-a_k),\quad k=0,1,2,\dotsc,\\
\label{TL_bc1}
 a_0(t)=0,\qquad a_1(0)=a_2(0)=\dotsb=a,\qquad b_0(0)=b_1(0)=\dotsb=1.
\end{gather}
Continuation by symmetry $a_{-k}=-a_k$, $b_{-k}=b_{k-1}$ on the entire $k$ axis brings to the Cauchy problem with step-like initial data. To solve it, in \cite{Venakides_Deift_Oba_1991, Kamvissis_1993, Deift_Kamvissis_Kriecherbauer_Zhou_1996} the inverse scattering method was developed, which made it possible to consider more general initial data tending to constants as $|k|\to\infty$. The substitution 
\[
 a_k=a-\dot q_k,\qquad b_k=e^{q_k-q_{k+1}}
\]
relates the problem (\ref{TL_k}), (\ref{TL_bc1}) with the piston problem \cite{Holian_Straub_1978, Holian_Flaschka_McLaughlin_1981} for the Toda lattice in the canonical variables:
\[
 \ddot q_k= e^{q_{k-1}-q_k}-e^{q_k-q_{k+1}},\qquad 
 q_0(t)=at,\qquad q_k(0)=\dot q_k(0)=0,\qquad k>0.
\]
This is a model of a system of interacting particles on a semiaxis with initial data corresponding to a state of rest and with a boundary condition at $k=0$ corresponding to a special, infinitely massive particle moving with a constant velocity $a$. Values $a>0$ correspond to shock waves, $a<0$ correspond to rarefaction waves. In the cited papers, it was shown that the values $a=\pm2$ are critical and separate different solution regimes. In particular, it turns out that for $a\le-2$, the solutions of problem (\ref{TL_k}), (\ref{TL_bc1}) asymptotically coincide with the solutions for a lattice with the boundary condition $b_0(t)=0$ instead of $a_0(t)=0$. This corresponds to our main case $\beta=0$, under the substitution (\ref{ab})
\[
 a_k=u_{2k-1}+u_{2k},\qquad b_k=u_{2k}u_{2k+1}
\]
(see Section \ref{s:ab}). Thus, the problem (\ref{VLshock}) with $\beta=0$, corresponding to the self-similar regime, under this substitution maps only to some limiting case of the problem (\ref{TL_k}), (\ref{TL_bc1}). This, however, does not make it trivial, since the substitution (\ref{ab}) is irreversible and, as we have seen, the same solution of the Toda lattice corresponds to different solutions of the Volterra lattice, whose properties differ depending on the value of the parameter $\alpha$.

The case $\beta\ne0$, however, provides a generalization which is different from the problem (\ref{TL_k}), (\ref{TL_bc1}). Indeed, the boundary condition $u_0(t)=\beta\ne0$ yields the relations $b_0=\beta u_1$ and $\dot u_0=0=\beta(u_1-u_{-1})$, that is, $u_{-1}=u_1$; then, after the substitution (\ref{ab}), we obtain $a_0=u_{-1}+u_0=u_1+\beta$. Thus, we arrive at the Toda lattice (\ref{TL_k}) with the initial-boundary conditions
\begin{gather*}
 a_0(t)=\beta+b_0(t)/\beta,\\
 a_1(0)=a_2(0)=\dotsb=\alpha+1/\alpha,\qquad 
 b_0(0)=\alpha\beta,\qquad b_1(0)=b_2(0)=\dotsb=1,
\end{gather*}
containing two parameters, unlike (\ref{TL_bc1}). From the above results of numerical experiments, it follows that in this problem there are also critical values separating different solution regimes.

\section{Generalized Volterra lattices}\label{s:gVL} 

Let us consider an initial-boundary value problem of the form
\begin{gather}
\label{gVL}
 \dot u_n= r(u_n)(u_{n+1}-u_{n-1}),\\
\label{gVLt0}
 u_0(t)=0,\qquad u_1=u_3=\dots=\alpha,\qquad u_2=u_4=\dots=\beta
\end{gather}
where $r(u)$ is a given function such that $r(0)=0$. Numerical experiments show that for some functions, the behavior of the system differs qualitatively little from the VL case $r(u)=u$ discussed above. As an example, we present the graphs in Fig.~\ref{fig:tanh}, corresponding to $r(u)=\tanh u$ (cf. the graphs in Section \ref{s:leading}). The limiting shape and velocity of the decay wave, of course, depend on the choice of $r$. We explain how they are calculated, using non-rigorous reasoning and without further asymptotic analysis.

Substituting the series (\ref{uFG}), (\ref{FGseries}) into (\ref{gVL}), instead of (\ref{fg0sys}), yields the following system with leading coefficients $f=f_0$, $g=g_0$ (for brevity, we omit the subscript, since we do not intend to consider subsequent terms of the series):
\[
 -vxf'(x)=r(f(x))g'(x),\qquad -vxg'(x)=r(g(x))f'(x).
\]
This amounts to the system of equations
\begin{gather}
\label{rfgsys1}
 r(f)r(g)=v^2x^2,\\
\label{rfgsys2}
 r'(f(x))g'(x)+r'(g(x))f'(x)=-2v.
\end{gather}
If $r(u)=u$ then the latter equation is integrated, and we come to the algebraic system (\ref{fg1}) and (\ref{fg2}), from which the functions (\ref{fgc}) were determined. Another case when equation (\ref{rfgsys2}) is integrated is $r(u)=u(\gamma-u)$, that is, the modified Volterra lattice; in this case, the system on $f$ and $g$ takes the form
\[
 f(\gamma-f)g(\gamma-g)=v^2x^2,\qquad \gamma(f+g)-2fg=4c-2vx. 
\]
These equations no longer define a parabola, but a more complicated algebraic curve (also of genus 0), some branches of which serve as attractors for the solution of the lattice equation. However, the rules for selecting these branches require a more detailed analysis. The presence of an additional parameter  $\gamma$ may lead to a greater variety of decay scenarios compared to the Volterra lattice. Therefore, studying the modified lattice remains a more difficult open problem.

It is interesting to note that the cases $r=u$ and $r=u(\gamma-u)$ exhaust the equations of the form (\ref{gVL}) integrable in the sense of the inverse scattering method (according to the Yamilov classification \cite{Yamilov_1984, Yamilov_2006}). In general, the system (\ref{rfgsys1}), (\ref{rfgsys2}) apparently is not solvable by quadratures. However, it can be solved numerically, thereby obtaining the limiting profile of the lattice solution. The system can be resolved with respect to derivatives as follows:
\begin{equation}\label{rfgsys'}
 f'(x)=\frac{2vr(f(x))}{vxr'(f(x))-r(f(x))r'(g(x))},\quad
 g'(x)=\frac{2vr(g(x))}{vxr'(g(x))-r(g(x))r'(f(x))};
\end{equation}
for this system equation (\ref{rfgsys1}) defines the constraint consistent with the dynamics. The value of $v$ is determined by the initial conditions which are specified at the point $x=1$, according to the same rules as in Section \ref{s:leading}, that is,
\begin{equation}\label{fg1'}
\begin{aligned}
 \text{for}~ t\to+\infty:&\quad f(1)=\max(\alpha,\beta),\quad g(1)=\min(\alpha,\beta),\quad
 v=\sqrt{r(\alpha)r(\beta)};\\
 \text{for}~ t\to-\infty:&\quad f(1)=\alpha,\quad g(1)=\beta,\quad v=-\sqrt{r(\alpha)r(\beta)}.
\end{aligned}   
\end{equation}
For the case $t\to+\infty$ and $\alpha=\beta$, a minor technical difficulty arises due to the fact that the denominators in (\ref{rfgsys'}) vanish at $x=1$ (the branches $f(x)$ and $g(x)$ meet at this point and have a vertical tangent). However, even in this case, a numerical solution can be constructed, for example, in a parametric form.

Suppose that the solution $f,g$ of the Cauchy problem (\ref{rfgsys'}), (\ref{fg1'}) exists on the interval $x\in[0,1]$, then these functions determine the desired profile of the lattice solution. If the solution does not extend to the entire interval $x\in[0,1]$, this means that the asymptotics in the form (\ref{uFG}) is not realized for the lattice (\ref{gVL}) with the given function $r(u)$. Determining necessary or sufficient conditions on $r(u)$ under which the hypothesis of asymptotically self-similar behavior is true is beyond the scope of our paper. However, it is clear that there are quite a few such functions, meaning that the self-similar regime is quite universal and deserves further study.

\begin{figure}[t!]
\centerline{\includegraphics[width=\textwidth]{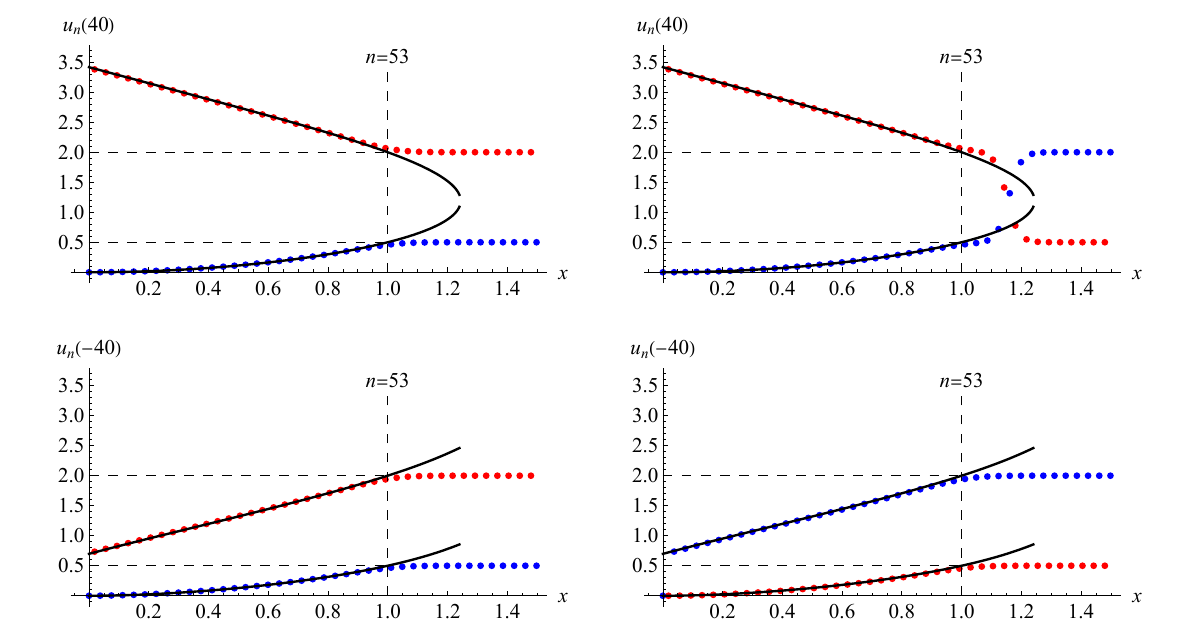}}
\caption{Solution of the problem (\ref{gVL}), (\ref{gVLt0}) for $r(u)=\tanh(u)$ and $\alpha=2$, $\beta=1/2$ (left); $\alpha=1/2$, $\beta=2$ (right). The curves correspond to the solution of the system (\ref{rfgsys'}).}
\label{fig:tanh}
\end{figure}

Fig.~\ref{fig:tanh} compares, for $r(u)=\tanh u$, numerical solutions of the Cauchy problems posed for the lattice equations themselves and for the system on $f$ and $g$. The values $\alpha=2$ and $\beta=1/2$ (and vice versa) are chosen, with $|v|=\sqrt{\tanh(1/2)\tanh(2)}\approx0.667$. The vertical line on the plots with abscissa $x=1$ corresponds to $n=2vt$, and, as we see, it is near this point that the solution departs from the predicted limit curves. The values $f(0)$ and $g(0)$ (found numerically by solving the Cauchy problem (\ref{rfgsys'}), (\ref{fg1'})) determine the post-decay steady state $a,0,a,0,\dots$ to which the system tends: $a\approx3.417$ at $t\to+\infty$ and $a\approx0.698$ at $t\to-\infty$.

\subsection*{Acknowledgments}

The authors are grateful to V.Yu.~Novokshenov for providing the program for calculating the Hastings--McLeod solution. The research of V.E.~Adler is supported by state contract FFWR-2024-0012 of the Landau Institute for Theoretical Physics of the Russian Academy of Sciences. The research of B.I.~Suleimanov is supported by state contract FMRS-2025-0012 of the Institute of Mathematics with Computing Centre---Subdivision of the Ufa Federal Research Centre of Russian Academy of Science.


\end{document}